\newcommand{\img}{\mathtt{img}}
\newcommand{\R}{\mathcal{R}}
\newcommand{\val}{\mathit{eval}}
\newcommand{\G}{\mathsf{G}}
\newcommand{\A}{\mathcal{A}}
\renewcommand{\img}{\mathtt{img}}
\renewcommand{\val}{\mathtt{val}}
\newcommand{\opt}{\mathtt{opt}}
\newcommand{\sv}{\phi}
\newcommand{\tmymath}[1]{\tilde{\mathds{#1}}}
\newcommand{\tuple}[1]{\langle#1\rangle}
\newcommand{\nop}[1]{}
\newcommand{\longv}[1]{}
\newcommand{\products}{\mathit{products}}
\newcommand{\neigh}{\mathit{Neigh}}
\newcommand{\submitted}{\psi}
\newtheorem{fact}[theorem]{Fact}
\begin{document}

\title{Computing the Shapley Value in Allocation Problems: Approximations and Bounds, with an Application to the Italian VQR Research Assessment Program}


\author{Francesco Lupia\inst{1} \and Angelo Mendicelli\inst{1} \and Andrea Ribichini\inst{2} \and Francesco Scarcello\inst{1} \and Marco Schaerf\inst{3}}

\institute{Dept. of Computer Science, Modeling, Electronics and Systems Engineering, University of Calabria, 87036, Rende, Italy\\
 {\tt \{lupia,a.mendicelli,scarcello\}@dimes.unical.it}
 \and Dept. of Physics, Sapienza University, 00185, Rome, Italy\\
 {\tt ribichini@dis.uniroma1.it}
 \and Dept. of Computer, Control, and Management Engineering Antonio Ruberti, Sapienza University, 00185, Rome, Italy\\
 {\tt marco.schaerf@uniroma1.it}}

\maketitle

\begin{abstract}
In allocation problems, a given set of goods are assigned to agents in such a way that the social welfare is maximised, that is, the largest possible global worth is achieved.
 When goods are indivisible, it is possible to use money compensation to perform a fair allocation taking into account the actual contribution of all agents to the social welfare.
 Coalitional games provide a formal mathematical framework to model such problems, in particular the Shapley value is a solution concept widely used for assigning worths to agents in a fair way.
 Unfortunately, computing this value is a $\#${\rm P}-hard problem, so that applying this good theoretical notion is often quite difficult in real-world problems.

We describe useful properties that allow us to greatly simplify the instances of allocation problems, without affecting the Shapley value of any player. Moreover, we propose algorithms for computing lower bounds and upper bounds of the Shapley value, which in some cases provide the exact result and that can be combined with approximation algorithms.

The proposed techniques have been implemented and tested on a real-world application of allocation problems, namely, the
 Italian research assessment program, known as VQR. For the large university considered in the experiments, the problem involves thousands of agents and goods (here, researchers and their research products).
 The algorithms described in the paper are able to compute the Shapley value for most of those agents, and to get a good approximation of the Shapley value for all of them. 
\end{abstract}

\begin{keywords}
Coalitional games; Allocation problems; Game theory; Shapley value computation; Approximation algorithms; Research assessment exercises
\end{keywords}

\section{Introduction}

\subsection{Coalitional Game Theory}
Coalitional games provide a rich mathematical framework to analyze interactions between intelligent agents.
We consider coalitional games of the form $\G=\tuple{N,v}$,  consisting of a set $N$ of $n$ agents and a characteristic function $v$. The latter maps each
coalition $C\subseteq N$ to the worth that agents in $C$ can obtain by collaborating with each other. In this context, the crucial problem is to find a mechanism to allocate the worth $v(N)$, i.e., the value of the grand-coalition $N$,  in a way that is fair for all players and that additionally satisfies some further important properties such as efficiency: we distribute precisely the available budget $v(N)$
 to players (not more and not less).
 Moreover, for fairness and stability reasons, it is usually required that every group of agents $C$ gets at least
 the worth $v(C)$ that it can guarantee to the game.

Several solution concepts have been considered in the literature as ``fair allocation'' schemes and, among  them, a prominent one is the \emph{Shapley value}~\cite{shapley53}. According to this notion, the worth of any agent $i$ is determined by considering its actual contribution to all the possible coalitions of agents. More precisely, it is considered the so-called {\em marginal contribution} to any coalition $C$, that is, the difference between what can be obtained when $i$ collaborates with the agents in $C$ and what can be obtained without the contribution of $i$.
   More formally, the Shapley value of a player $i \in N$ is defined by the following
 weighted average of all such marginal contributions:

$$\sv_i(\G)=\sum_{C \subseteq N\setminus \{i\}}\frac{|C|!(n-|C|-1)!}{n!}\Big(v(C\cup \{i\})-v(C)\Big).$$

\subsection{Allocation Games}
Among the various classes of coalitional games, we focus in this paper on \emph{allocation games}, which is a setting for analyzing fair division problems where monetary compensations are allowed and utilities are quasi-linear~\cite{Moulin1992}. Allocation games naturally arise in various application domains, ranging from house allocation to room assignment-rent division, to (cooperative) scheduling and task allocation, to protocols for wireless communication networks, and to queuing problems (see, e.g.,~\cite{Greco2014i,militano:twc,Francois2003,Mishra2007,Moulin1992} and the references therein).

 Computing the Shapley value of such games is a difficult problem, indeed it is
 {\rm \#P}-{hard} even if  goods can only have two different possible values~\cite{GLS15}.
 In this paper we focus on large instances of this problem, involving thousands of agents and goods, for which no algorithm described in the literature is able to provide an exact solution.
  There are however some promising recent advances that identify islands of tractability for the allocation problems where at most one good is allocated to each agent:
  it has been recently shown that those instances where
  the treewidth of the agents' interaction-graph is bounded by some constant (i.e., have a low degree of cyclicity) can be solved in polynomial-time~\cite{GLS15}. The result is based on recent advances on counting solutions of conjunctive queries with existential variables~\cite{Greco2014b}.
  Unfortunately, if the structure is quite cyclic this technique cannot be applied to large instances, because its computational complexity has an exponential dependency on the treewidth.

  In some applications, one can be satisfied with approximations of the Shapley value.
  With this respect, things are quite good in principle, since we know
 there exists a fully polynomial-time randomized approximation scheme
  to compute the Shapley value in supermodular games~\cite{Liben-Nowell2012}.
   The algorithm can thus be tuned to obtain the desired maximum expected error, as a percentage of the correct Shapley value.
   However, not very surprisingly, for very large instances one has to consider a huge number of samples, in order to stay below a reasonable expected error. 
   Maleki et al.~\cite{Maleki2014} provide bounds for the estimation error (as an absolute number rather than a percentage of the correct value) if the variance or the range of the samples are known. They also introduce stratified sampling as a method to further reduce the number of required samples.

 \subsection{Contribution}
 In order to attack large instances of allocation problems, we start by proving some useful properties of these problems that allow us to decompose instances into smaller pieces, which can be solved independently.
  Moreover, some of these properties identify cases where the computation of the worth function can be obtained in a very efficient way.

  With these properties, we are able to use the randomized approximation algorithm of Liben-Nowell et al.~\cite{Liben-Nowell2012} even on instances that (when not decomposed) are very large.

  Furthermore, we note that in some applications one may prefer to determine a guaranteed interval for the Shapley value, rather than one probably good point.
  Therefore, we propose algorithms for computing a lower bound and an upper bound of the Shapley value for allocation problems. In many cases the distance between the two bounds is quite small, and sometimes they even coincide, which means that we actually computed the exact value.
  We also used these algorithms together with the approximation algorithm of Liben-Nowell et al.~\cite{Liben-Nowell2012}, to provide a more accurate evaluation of the maximum error of this randomized solution, for the considered instances.

Moreover, by plugging the computed lower bound values into the randomized sampling algorithm proposed by Maleki et al.~\cite{Maleki2014}, we were able to express their error bound as a percentage of the correct Shapley value, rather than as an absolute number, at least for our test instances. This allowed us to compute approximate Shapley values for our largest test case (namely, the 2011-2014 research assessment exercise of Sapienza University of Rome), within 5\% of the correct value with 99\% probability, in a matter of hours.

 \subsection{The Case Study}
 \label{subsec:case_study}
 We have tested the proposed techniques on large real-world instances of the VQR2011-2014 Italian research assessment exercise. This exercise requires
 every Italian research structure $R$ to select some research products, and
submit them to an evaluation agency called ANVUR. While doing so, the structure $R$ is in competition with all other Italian research structures, as the outcome of the
evaluation will be used to proportionally transfer the funds allocated by the Ministry to support research activities in the next years (until
the subsequent evaluation process). Every structure $R$  is therefore interested in selecting and submitting its best research products. For the sake of simplicity, we next simply speak of publications instead of research products (which can also be patents, books, etc.), and of universities and departments instead of structures and substructures (which can be other research subjects).
The programme is articulated in two phases:
(1)  Based on authors' self-evaluations and on ANVUR guidelines, $R$ selects and submits to ANVUR (at most) two publications for each one of its authors\footnote{There are exceptions to this rule: in specific circumstances, fewer than two publications are expected for some authors. To our ends, this detail is immaterial.}, in  such a way that any product is formally associated with at most one author.
(2)  ANVUR formulates its independent quality judgment about the submitted publications (the score assigned to each publication is currently made known only to its authors), 
and the sum of the scores resulting from ANVUR's evaluation is then the VQR score of $R$. Eventually, $R$ will receive funds in subsequent years proportional to this score.
Furthermore,
ANVUR also published an evaluation of all departments, based on the product scores (the score of each department was computed as the sum of the scores of the products formally assigned to the authors in that department). Finally, the scores were also used for evaluating individual researchers that had been recently hired by $R$ (this also greatly influenced $R$'s funds in subsequent years), as well as those researchers that were members of PhD committees. Scores for recently hired researchers were computed as the sum of the scores of the products formally assigned to them; data in this respect were published by ANVUR in aggregated form only, for each department and for each scientific disciplinary sector. Evaluations for researchers that were members of PhD committees were computed as the sum of the scores of the best publications each one of them had coauthored, among all the publications submitted for the VQR (for this evaluation, the formal assignment of publications to authors was irrelevant); data in this respect were published by ANVUR in aggregated form only, for each PhD committee.

The way ANVUR currently uses product scores, for the purposes described above, yields evaluations that do not satisfy the desirable properties outlined in Section~\ref{sec:properties}.
In order to deal with this issue, we have modeled the problem as an allocation game~\cite{GS13}, with a fair way to divide the total score of the university among researchers, groups, and departments based on the Shapley value. The proposed division rule enjoys many desirable properties,
such as the independence of the specific allocation of research products, the independence of the preliminary (optimal) products selection,
the guarantee of the actual (marginal) contribution, and so on.

\section{Preliminaries}

In the setting considered in this paper, a game is defined by an allocation scenario
$\A=\tuple{N,\mathbb{{G}},\Omega,\val,k}$ comprising a set of
agents $N$ and a set of goods $\mathbb{{G}}$, whose values are given by the function $\val$ mapping each good to a non-negative real number.
 The function $\Omega$ associates each agent with the set of goods he/she is interested in.
Moreover, the natural number $k$ provides the maximum number of goods that can be assigned to each agent. 
Each good is indivisible and can be assigned at most to one player.

For a coalition $C\subseteq N$, a (feasible) allocation $\pi_\A[C]$ is a mapping from $C$ to sets of goods from $\mathbb{{G}}$ such that: each agent $i\in C$ gets a set of goods $\pi_\A(i)\subseteq \Omega(i)$ with $|\pi_\A(i)|\leq k$,
and $\pi_\A(i)\cap\pi_\A(j) = \emptyset$, for any other agent $j\in C$
(each good can be assigned to one agent at most).

We denote by $\img(\pi_\A[C])$ the set of all goods in the image of $\pi_\A[C]$, that is,
$\img(\pi_\A[C])=\bigcup_{i\in C} \pi_\A[C](i)$.
    With a slight abuse of notation, we denote by $\val(S)$ the sum of all the values of a set of goods $S\subseteq \mathbb{{G}}$,
     and by $\val(\pi_\A[C])$ the value $\val(\img(\pi_\A[C]))$.
 An allocation $\pi_\A[C]$ is optimal if there exists no allocation $\pi'_\A[C]$ with
  $\val(\pi'_\A[C]) > \val(\pi_\A[C])$. The total value of such an optimal allocation for the coalition $C$ is denoted by $\opt_\A(C)$. The budget available for $\A$, also called the (maximum) social welfare, is $\opt_\A(N)$, that is, the value of any optimal allocation for the whole set of agents $N$
  (the grand-coalition).
  The coalitional game defined by the scenario $\A$ is the pair $\tuple{N,\opt_\A}$, that is, the game where the worth of any coalition is given by the value of any of its optimal allocations.
Note that $\opt_\A(C)\geq 0$ holds, for each $C\subseteq N$, since the allocation where no agent receives any goods is a feasible one (the value of an empty set of goods is $0$).
The definition trivializes for $C=\emptyset$, with $\opt_\A(\emptyset)=0$.


%

\begin{figure}[h]
\centering
   \includegraphics[width=0.7\textwidth]{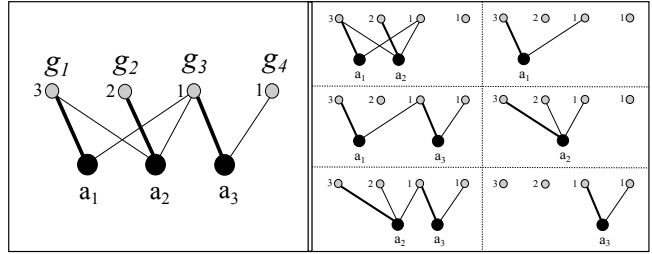}
\caption{Allocation scenario $\A_0$ in Example~\ref{ex:running}.}\label{fig:running}
\end{figure}
\begin{example}\label{ex:running}\em
Consider the allocation scenario $\A_0=\tuple{\{a_1,a_2,a_3\},\{g_1,g_2,g_3,g_4\},\Omega,\val,1}$, depicted in a graphical way in Figure~\ref{fig:running},
where each edge connects an agent to a good she is interested in, and 
 it is possible to allocate just one good to each agent ($k=1$). 
 The figure shows on the left an allocation for all the agents, with the edges in bold identifying
 the allocation of goods to agents.  Note that this is an optimal allocation, i.e., a feasible allocation whose sum of values of the allocated goods is the maximum possible one. The value of this allocation is $\val(g_1)+\val(g_2)+\val(g_3)=3+2+1=6$. 
 
 The coalitional game associated with this scenario is $\G_{\A_0}=\tuple{\{a_1,a_2,a_3\},v_{\A_0}}$,
 where the worth function $v_{\A_0}$ is precisely $\opt_{\A_0}$.
 In particular, we have seen that, for the grand-coalition, $v_{\A_0}(\{a_1,a_2,a_3\})= 6$ holds.
For each $C\subset \{a_1,a_2,a_3\}$ with $C\neq \emptyset$, an optimal allocation restricted to the agents in $C$ is also reported in Figure~\ref{fig:running}. It follows that the other values of the worth function are $v_{\A_0}(\{a_1,a_2\})=5$,
$v_{\A_0}(\{a_1,a_3\})$ = $v_{\A_0}(\{a_2,a_3\})=4$, $v_{\A_0}(\{a_1\}) = v_{\A_0}(\{a_2\})=3$, and $v_{\A_0}(\{a_3\})=1$. \hfill $\lhd$
\end{example}

For any allocation scenario $\A=\tuple{N,\mathbb{G},\Omega,\val,k}$, we define the {\em agents graph} as the undirected graph $G(\A)=(N,E)$ such that $\{i,j\}\in E$ if there is a good $g\in \Omega(i)\cap \Omega(j)$.


\section{The VQR Allocation Game}\label{sec:motivation}

Note that the VQR research assessment exercise can be naturally modeled as an allocation scenario
 $\A=\tuple{\R,\mathcal{P},\products,\val,2}$ where $\R$ is the set of researchers affiliated with a certain university $R$, $\mathcal{P}$ is the set of publications selected by $R$ for the assessment exercise, $\products$ maps authors to the set of publications they have written, and
  $\val$ assigns a value to each publication. In the current VQR programme (covering years 2011-2014), the range of $\val$  is $\{0, 0.1, 0.4, 0.7, 1 \}$, with the latter value reserved to the {\em excellent} products.

  In the submission phase, the values are estimated by the universities according to authors' self-evaluations, and to the reference tables published by ANVUR (not available for some research areas). 
   At the end of the program, $R$ will receive an amount of funds proportional to
  $V_R=\val(\mathcal{P})$, that is, to the considered measure of the quality of the research produced by the university $R$.
  The first combinatorial problem, which is easily seen to be a weighted matching problem, is to
  identify the best allocation scenario for the university. That is, to select a set of publications
  $\mathcal{P}$ to be submitted, having the maximum possible total value among all those authored by $\R$ in the considered period.

   The final result may sometimes be different from the preliminary estimate, in particular because of those publications that undergo a peer-review process by experts selected by ANVUR, which clearly introduces a subjective factor in the evaluation. We assume that the values used by $R$ in the preliminary phase do coincide with the final ANVUR evaluation for all products. This is actually immaterial for the purpose of this paper, because we are interested here in the final division, where only the final (ANVUR) evaluation matters. However, we recall for the sake of completeness that, by adopting the fair division rule used in this paper, the best choice for all researchers is to provide their most accurate evaluation, so that $R$ is able to submit any optimal selection of products to ANVUR. In particular, any strategically incorrect self-evaluation by any researcher is useless, in that it cannot lead to any improvement in her/his personal evaluation, while it can lead to a worse evaluation if the best total value for $R$ is missed~\cite{GS13}.

%
\begin{figure}
  \centering
  \includegraphics[width=0.6\textwidth]{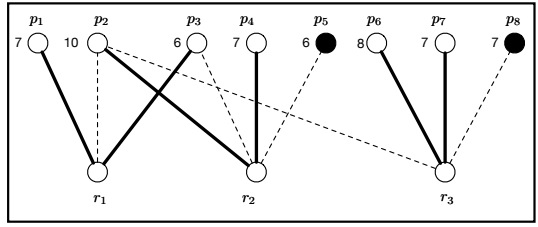}
  \caption{Authors and products in Example~\ref{ex:uno}.}
  \label{fig:esempio}
\end{figure}

\begin{example}\label{ex:uno}\em
Let us consider the weighted bipartite graph in Figure~\ref{fig:esempio}, whose vertices are the researchers $\R= \{r_1,r_2,r_3\}$ of a university $R$ and all the publications they have written.
Edges encode the authorship relation $\products$, and weights encode the 
mapping $\val$ providing the values of the publications.
Consider the optimal allocation $\submitted$ such that $\submitted(r_1)=\{p_1,p_3\}$, $\submitted(r_2)=\{p_2,p_4\}$, and $\submitted(r_3)=\{p_6,p_7\}$, encoded by the solid lines in the figure.
 Based on this allocation, an optimal selection of publications to be submitted for the evaluation is $\mathcal{P}_\submitted=\{p_1,p_2,p_3,p_4,p_6,p_7\}$. The publications that are not submitted are shown in black in the figure.
Note that $p_2$ is co-authored by
$r_1$, $r_2$, and $r_3$, while $p_3$ is co-authored by $r_1$ and $r_2.$
Thus, the allocation scenario to be considered is 
$\A=\tuple{\R,\mathcal{P}_\submitted,\products,\val,2}$, and the associated
coalitional game is the pair $\tuple{\R,\opt(\R)}$.
In particular, the total value of the grand-coalition is $\opt(\R)=45$.
\hfill $\lhd$
\end{example}

 The problem that we face is how to compute, from the total value obtained by $R$, a fair score for individual researchers, or groups, or departments, and so on.  As mentioned above, product scores are currently used for evaluating the hiring policy of universities and the PhD committees, and from this year such scores contribute to evaluate the quality of  courses of study, too. Unfortunately,  this is currently done in a way that fails to satisfy the properties that we outline below. Instead, following~\cite{GS13}, we propose to use the Shapley value of the allocation game defined by the scenario selected by the given structure $R$ as the {\em division rule} to distribute the available total value (or budget) to all the participating agents.
 For the allocation scenario in Example~\ref{ex:uno}, we get
  $\sv_{r_1}=\frac{29}{2}$, $\sv_{r_2}=\frac{29}{2}$, and $\sv_{r_3}=16$. Notice that the Shapley value is not a percentage assignment of publications to authors, but takes into account all possible coalitions of agents.
   Note that $r_3$ is not penalized by the fact that its best publication $p_2$ is assigned to researcher $r_2$, in the submission phase determined by the optimal allocation depicted in Figure~\ref{fig:esempio}.
    Similarly, $r_1$ is not penalized by the fact that the worst publication $p_3$ is assigned to her/him (instead of being assigned to $r_2$).
    
 Another important property is that the value assigned to each researcher is independent by the specific selection of products to be submitted, as long as the submission is an optimal one.
 For instance, an equivalent selection would consist of the products 
 $\mathcal{P}_{\submitted'}=\{p_1,p_2,p_4,p_5,p_6,p_7\}$, because of the optimal allocation $\submitted'$ such that
 $\submitted'(r_1)=\{p_1,p_2\}$, $\submitted'(r_2)=\{p_4,p_5\}$, and $\submitted'(r_3)=\{p_6,p_7\}$.
 It can be checked that no Shapley value changes for any researcher, by considering the alternative allocation scenario $\A'=\tuple{\R,\mathcal{P}_{\submitted'},\products,\val,2}$
 based on the selection of products $\mathcal{P}_{\submitted'}$.
  On the other hand this nice property does not hold for many division rules. For instance, assume that the value  of each researcher is determined by the average score of all the products  evaluated by ANVUR of which she is a (co-)author\footnote{The products that were not submitted 
   cannot be used, because they miss a certified evaluation by ANVUR.}. Then, in the former allocation scenario $r_1$ gets $23/3$, while in the latter one she gets $17/2$.
  Symmetrically, $r_2$ gets a higher value in the former scenario and a lower one in the latter.

  We will now recall the main desirable properties enjoyed by the division rule based on the Shapley value used in this paper. We refer the interested reader to~\cite{GS13} for a more detailed description and discussion of these properties.

 \smallskip {\bf Budget-balance}. \emph{The division rule precisely distributes the VQR score of $ R $ over all its members, i.e.,  $ \sum_{r \in \mathcal R} \sv_r = V_R $.}

\smallskip
{\bf Fairness}. \emph{The division rule is indifferent w.r.t. the specific optimal allocation used to submit the products to ANVUR. In particular,
 the score of each researcher is independent of the particular products assigned to him in the submission phase; moreover, it is independent of the specific set of products $\mathcal{P}$ selected by the university, as long as the choice is optimal (i.e., with the same maximum value $V_R$).}

\smallskip
{\bf Marginality}. \emph{For any group of researchers $ \mathcal S \subseteq \mathcal R $,
$ \sv_{\mathcal S} \ge marg(\mathcal S, \mathcal R)$, where
$ \sv_{\mathcal S} = \sum_{i \in \sv_{\mathcal S}} \sv_i$ and
   $marg(\mathcal S, \mathcal R) = \opt (\mathcal R) - \opt (\mathcal R\setminus \mathcal S)$.
That is, every group is granted at least its marginal contribution to the performance of the grand-coalition $\mathcal R$.}

\smallskip

We remark the importance of the {\em fairness} property, as the choice of a specific optimal set of products is immaterial for $R$, but it may lead to quite different scores for individuals (and for their aggregations, assume e.g. that researchers $r_1$ and $r_2$ above belong to different departments). As a matter of fact, this property does not hold for the division rules adopted by ANVUR for the evaluation of both departments and newly hired researchers (see Section~\ref{subsec:case_study}).  The \emph{budget-balance} property, on the other hand, is violated by the division rule for evaluating researchers who are members of PhD committees.

\section{Useful Properties for Dealing with Large Instances} \label{sec:properties}

Recall that computing the Shapley value is $\#${\rm P}-hard for many classes of games (see, e.g.,
\cite{AzizK14,BachrachR09,Deng1994,Nagamochi1997}), including the allocation games, even if goods may have only two possible values~\cite{Greco2014i}.

For large instances, a brute-force approach is unfeasible, because to compute the value of each agent $i\in N$, it would need to solve $2^n$ optimization problems, where $n=|N|$ is the number of agents.
This is particularly true in our case study, where $n$ is in the order of thousands.


In order to mitigate the complexity of this problem, in this section we will describe some useful properties of the Shapley value, in particular for allocation problems, which allow us to simplify the instances in a preprocessing phase.

Let us consider in this section an allocation scenario $\A=\tuple{N,\mathbb{{G}},\Omega,\val,k}$,
with $\G=\tuple{N,v}$ denoting its associated game, whose agents graph is $G= (N,E)$.
 For such scenario we show the following properties which allow us to simplify the game at hand without altering the Shapley value of any player: \textit{Modularity}, \textit{Null goods}, \textit{Separability}, \textit{Disconnected agent}.


\begin{theorem}[Modularity]\label{fact:modularity}
Let $\{ C_1, C_2 \}$ be a partition of agents of $N$ such that $\Omega(i)\cap \Omega(j)=\emptyset$, for every pair
of agents $i,j$ with $i\in C_1$ and $j\in C_2$. 
 Let $\G_1=\tuple{C_1,v_1}$ (resp., $\G_2=\tuple{C_2,v_2}$) be the coalitional game restricted to agents in  $C_1$
(resp., $C_2$). Then, for each agent $i\in N$, $\sv_i(\G) = \sv_i(\G_1) + \sv_i(\G_2)$.
\end{theorem}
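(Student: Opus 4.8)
The plan is to reduce the statement to two facts: (i) the worth function $v$ of the game $\G$ is itself \emph{modular} with respect to the partition $\{C_1,C_2\}$, i.e.\ it splits as a sum of a function depending only on the $C_1$-part of a coalition and one depending only on the $C_2$-part; and (ii) the Shapley value is linear in the characteristic function and assigns $0$ to null players. Fact (ii) is standard, so the real work is fact (i).

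\textbf{Step 1 (modularity of the worth function).} Let $\A_1=\tuple{C_1,\mathbb{G}_1,\Omega_1,\val,k}$ and $\A_2=\tuple{C_2,\mathbb{G}_2,\Omega_2,\val,k}$ be the sub-scenarios obtained by restricting $\A$ to $C_1$ and $C_2$, where $\mathbb{G}_1=\bigcup_{i\in C_1}\Omega(i)$, $\mathbb{G}_2=\bigcup_{j\in C_2}\Omega(j)$, and $\Omega_1,\Omega_2$ are the restrictions of $\Omega$; by hypothesis $\mathbb{G}_1\cap\mathbb{G}_2=\emptyset$, and the games $\G_1,\G_2$ of the statement are $\tuple{C_1,\opt_{\A_1}}$ and $\tuple{C_2,\opt_{\A_2}}$. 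I would first show that, for every $C\subseteq N$, writing $A=C\cap C_1$ and $B=C\cap C_2$,
$$\opt_\A(C)=\opt_{\A_1}(A)+\opt_{\A_2}(B).$$
For ``$\geq$'', glue an optimal allocation for $A$ in $\A_1$ with an optimal allocation for $B$ in $\A_2$: the result is a feasible allocation for $C$ in $\A$, since the two pieces draw goods from the disjoint sets $\mathbb{G}_1$ and $\mathbb{G}_2$ and so no good is used twice, and its value is the sum of the two optima. For ``$\leq$'', take any feasible allocation $\pi_\A[C]$; for $i\in A$ we have $\pi_\A(i)\subseteq\Omega(i)\subseteq\mathbb{G}_1$ and for $j\in B$ we have $\pi_\A(j)\subseteq\mathbb{G}_2$, so the restrictions of $\pi_\A[C]$ to $A$ and to $B$ are feasible in $\A_1$ and $\A_2$, and $\val(\pi_\A[C])=\val(\pi_\A[A])+\val(\pi_\A[B])\leq\opt_{\A_1}(A)+\opt_{\A_2}(B)$; choosing $\pi_\A[C]$ optimal gives the bound. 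Hence $v(C)=v_1(A)+v_2(B)$, i.e.\ $v=\hat v_1+\hat v_2$ where $\hat v_1,\hat v_2$ are the games on the common agent set $N$ defined by $\hat v_1(C)=v_1(C\cap C_1)$ and $\hat v_2(C)=v_2(C\cap C_2)$.

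\textbf{Step 2 (from modularity of $v$ to additivity of $\sv$).} By linearity of the Shapley value, $\sv_i(\G)=\sv_i(\tuple{N,\hat v_1})+\sv_i(\tuple{N,\hat v_2})$ for each $i\in N$. Fix $i\in C_1$ (the case $i\in C_2$ is symmetric). Every $j\in C_2$ is a null player of $\tuple{N,\hat v_1}$ (adding $j$ never changes $\hat v_1$), so $C_1$ is a carrier of $\tuple{N,\hat v_1}$; by the carrier/null-player property of the Shapley value, $\sv_i(\tuple{N,\hat v_1})=\sv_i(\tuple{C_1,v_1})=\sv_i(\G_1)$, and $\sv_i(\tuple{N,\hat v_2})=0$ because $i$ is a null player of $\tuple{N,\hat v_2}$. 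Thus $\sv_i(\G)=\sv_i(\G_1)$ when $i\in C_1$ and $\sv_i(\G)=\sv_i(\G_2)$ when $i\in C_2$, and in either case the vanishing ``foreign'' term lets us write $\sv_i(\G)=\sv_i(\G_1)+\sv_i(\G_2)$, as claimed (with the convention that the Shapley value of an agent outside a game's player set is $0$).

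\textbf{Main obstacle and an alternative for Step 2.} The only substantive point is Step 1; Step 2 is bookkeeping with known axioms. If a self-contained argument is preferred over quoting the carrier axiom, Step 2 can be done straight from the defining formula: for $i\in C_1$, partition the coalitions $C\subseteq N\setminus\{i\}$ by the pair $(A,B)=(C\cap C_1,C\cap C_2)$; the marginal contribution $v(C\cup\{i\})-v(C)=v_1(A\cup\{i\})-v_1(A)$ is independent of $B$, and summing the weights $\tfrac{|C|!(n-|C|-1)!}{n!}$ over all $B\subseteq C_2$ collapses, via the identity
$$\sum_{b=0}^{n_2}\binom{n_2}{b}\,\frac{(a+b)!\,(n_1+n_2-a-b-1)!}{(n_1+n_2)!}\;=\;\frac{a!\,(n_1-a-1)!}{n_1!},$$
with $n_1=|C_1|$, $n_2=|C_2|$, $a=|A|$, to exactly the Shapley coefficient of an $a$-subset in the $n_1$-agent game $\G_1$. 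This identity is the one genuinely computational step; it follows routinely from the Beta-function representation $\tfrac{a!(m-a-1)!}{m!}=\int_0^1 x^a(1-x)^{m-a-1}\,dx$ together with the binomial theorem, so I expect no real difficulty there.
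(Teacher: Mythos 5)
Your proposal is correct and follows essentially the same route as the paper's own proof: both lift $v_1,v_2$ to games $\hat v_1,\hat v_2$ on the full agent set $N$, show $v=\hat v_1+\hat v_2$ from the disjointness of the goods, and then conclude via additivity of the Shapley value plus the dummy/null-player (carrier) property. Your Step~1 simply spells out the two-sided inequality that the paper states in one sentence, and the explicit combinatorial collapse you offer as an alternative to the carrier axiom is a correct but optional elaboration.
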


\begin{proof}
Let $ G_1' = \tuple{N, v_1'} $ and $ G_2' = \tuple{N, v_2'} $ be two coalitional games such that,
for each $C\subseteq N$,
$v_1'(C) = v_1 (C \cap C_1)$ and $v_2'(C) = v_2 (C \cap C_2)$. Contrasted with the games in the statement, 
these games are defined over the full set of agents $N$.

Since there are no interactions between agents in $C_1$ and agents in $C_2$, the total value of the optimal allocation for any coalition $C$ is given by the sum of the values of the goods in
the optimal allocations restricted to the two sets of agents  $C \cap C_1$
and $C \cap C_2$. Therefore, we have $ v(C) = v_1' (C) + v_2' (C) $.
Then, from the additivity property of the Shapley value, for each agent $i\in N$, $\sv_i(\G) = \sv_i(\G_1') + \sv_i(\G_2')$. 

Consider now the games $\G_1=\tuple{C_1,v_1}$ and $\G_2=\tuple{C_2,v_2}$) restricted to agents in  $C_1$ and in
 $C_2$, respectively. 
Note that each player $j \in N\setminus C_1$ is \emph{dummy} with respect to the game $G_1'$, so that her Shapley value is null, and her presence have no actual impact on any other player in $G_1'$. 
 In particular such dummy agents could be removed from the game without changing the Shapley value of the other agents, so that for every  $i\in C_1$, we have $sv_i(G_1') = sv_i(G_1) $ and the result immediately follows (by using the same reasoning for $G_2$).

\end{proof}

From the above fact, it follows immediately that each connected component of the agents graph can treated as a separate coalitional game.

\begin{corollary}\label{cor:modularity}
Let  $Z$ be any connected component of the agents graph. The coalitional game
$\G_Z=\tuple{Z,v_Z}$ associated with the allocation scenario obtained by restricting $\A$ to the players in $Z$ 
is such that the Shapley value of each player in $Z$ is the same as in the full game associated with $\A$. 
\end{corollary}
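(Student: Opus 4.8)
The plan is to derive the corollary directly from Theorem~\ref{fact:modularity} by choosing a suitable bipartition of the agents. First I would set $C_1 = Z$ and $C_2 = N \setminus Z$, so that $\{C_1, C_2\}$ is a partition of $N$; if $Z = N$ then $\G_Z = \G$ and there is nothing to prove, so assume $C_2 \neq \emptyset$. The crucial observation is that this partition satisfies the hypothesis of Theorem~\ref{fact:modularity}: for any $i \in Z$ and $j \in N \setminus Z$ we must have $\Omega(i)\cap\Omega(j) = \emptyset$, since otherwise some good $g \in \Omega(i)\cap\Omega(j)$ would, by the very definition of the agents graph $G(\A)$, force the edge $\{i,j\}$ into $E$, contradicting the assumption that $Z$ is a connected component while $j \notin Z$.

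Once the hypothesis is verified, I would apply Theorem~\ref{fact:modularity} with $\G_1 = \G_Z$, the game associated with $\A$ restricted to $Z$, and $\G_2$ the game associated with $\A$ restricted to $N\setminus Z$ (these are exactly the restricted games named in the theorem, so no further identification is needed). For every agent $i \in Z$ the theorem yields $\sv_i(\G) = \sv_i(\G_Z) + \sv_i(\G_2)$, and since $i$ is not a player of $\G_2$ the second summand vanishes, giving $\sv_i(\G) = \sv_i(\G_Z)$ as desired.

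The single point requiring a word of care is that $C_2 = N\setminus Z$ is in general not itself connected, but rather a union of several connected components of $G(\A)$. This causes no difficulty: Theorem~\ref{fact:modularity} only asks that the two blocks of the partition share no good, not that either block be connected, so the argument applies verbatim. I do not expect any genuine obstacle here; the corollary is essentially a restatement of modularity, once one notes that the absence of an edge between $Z$ and its complement is exactly the absence of a shared good between them.
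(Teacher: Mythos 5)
Your proof is correct and takes essentially the same route as the paper, which treats the corollary as an immediate consequence of the Modularity theorem (Theorem~\ref{fact:modularity}): the absence of edges leaving a connected component $Z$ means no good is shared between $Z$ and $N\setminus Z$, so the theorem applies with $C_1=Z$, $C_2=N\setminus Z$, and the term from the restricted game on $N\setminus Z$ vanishes for players in $Z$. Your explicit remarks about the trivial case $Z=N$ and about $N\setminus Z$ not needing to be connected are fine and consistent with the paper's intent.
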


It easy to see that goods having value $0$ do not impact on the computation of the optimal allocation. However, the existence of shared null goods between multiple agents induces connections (among agents) which complicates the structure of the graph.

For instance, consider an allocation scenario $\A'$ comprising three agents $\{r_1,r_2,r_3\}$ having a joint interest only for one good, say $g_0$, whose value is $0$. Any other good has just a single agent interested in it.
  In such a scenario, Corollary~\ref{cor:modularity} cannot be used, since the agents graph associated with the scenario $\A'$  consists of one connected component. On the other hand, without $g_0$, the agents graph would be completely disconnected and thus it would be possible to compute the Shapley values immediately, by using Corollary~\ref{cor:modularity}.
 The following fact states that, in fact, we can get rid of such null goods.

\begin{fact}[No shared null goods]
\label{fact:no-shared-null-goods}
By removing all goods having value $0$ from $\mathbb{{G}}$, we get an allocation scenario with the same associated allocation game. 
\end{fact}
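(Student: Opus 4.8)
The plan is to prove the stronger assertion that the two scenarios induce \emph{identical} worth functions on $2^N$; since the Shapley value of a game depends only on its characteristic function, equality of the associated games immediately yields equality of every player's Shapley value, which is all the statement requires.

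Write $\A'=\tuple{N,\mathbb{G}',\Omega',\val',k}$ for the scenario obtained from $\A$ by deleting every good $g$ with $\val(g)=0$; that is, $\mathbb{G}'=\{g\in\mathbb{G}:\val(g)>0\}$, $\Omega'(i)=\Omega(i)\cap\mathbb{G}'$ for each $i\in N$, and $\val'$ is the restriction of $\val$ to $\mathbb{G}'$. Fix an arbitrary coalition $C\subseteq N$; the goal is to show $\opt_{\A'}(C)=\opt_{\A}(C)$, from which $\opt_{\A'}=\opt_{\A}$ and hence the claim follows.

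For the inequality $\opt_{\A'}(C)\le\opt_{\A}(C)$, I would simply observe that every feasible allocation of $\A'$ for $C$ is already a feasible allocation of $\A$ for $C$: the goods of $\mathbb{G}'$ are among the goods of $\mathbb{G}$, each agent $i\in C$ still receives a subset of $\Omega(i)$ of size at most $k$, and the pairwise-disjointness condition on the images is untouched; moreover its value is unchanged, since $\val$ and $\val'$ agree on $\mathbb{G}'$. For the reverse inequality, I would take an optimal allocation $\pi$ of $\A$ for $C$ and set $\pi'(i)=\pi(i)\cap\mathbb{G}'$, i.e.\ discard from each agent's bundle the goods of value $0$. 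Then $\pi'(i)\subseteq\Omega(i)\cap\mathbb{G}'=\Omega'(i)$, we have $|\pi'(i)|\le|\pi(i)|\le k$, and the images stay pairwise disjoint, so $\pi'$ is feasible for $\A'$; its value equals that of $\pi$ because the removed goods all contribute $0$, so $\val'(\img(\pi'))=\val(\img(\pi)\cap\mathbb{G}')=\val(\img(\pi))=\opt_{\A}(C)$. Hence $\opt_{\A'}(C)\ge\opt_{\A}(C)$, and combining the two inequalities gives equality.

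There is essentially no hard step here: the argument is a short verification that feasibility is preserved in both directions. The only subtlety worth stating explicitly is that dropping goods from a bundle can never violate the capacity constraint (it only decreases $|\pi(i)|$) and that the model never forces an agent to receive a good, so removing the null goods an optimal allocation happens to use is always admissible and never lowers the value. Once $\opt_{\A}=\opt_{\A'}$ is established, the two scenarios define the same coalitional game $\tuple{N,\opt}$, so by the very definition of $\sv$ every player has the same Shapley value in both, as claimed.
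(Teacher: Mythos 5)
Your proof is correct and rests on the same observation as the paper's (the paper's one-line proof simply notes that value-$0$ goods give no advantage in any optimal allocation); you merely spell out explicitly, via the two inequalities on $\opt_{\A}(C)$ and $\opt_{\A'}(C)$, what the paper leaves as an immediate remark. No gap, and no substantively different route.
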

\begin{proof}
Just observe that in the computation of the marginal contribution of any agent $i$ to a coalition $C$, there is no advantage for agents in $C$ in using a good in $\Omega(i)$ having value $0$.
\end{proof}

If it is useful in the algorithms, we can also use Fact~\ref{fact:no-shared-null-goods} in the opposite way, and add 
 null-value goods. Let  $g$ be a good with $\val(g)=0$ and let $X = \{ a \in N \mid g \in \Omega(a)\}$ be the set of agents that are interested in having $g$. Then, the game associated with $\A$ is the same as the game associated with the allocation scenario where $g$ is replaced by fresh goods $g_1,\dots g_{|X|}$ such that each of them is of interest to just one agent in $X$ (hence, there are no connections in the graph because of such goods).


The following property provides us with a powerful simplification method for allocation games. Intuitively, the property states that any set of agents $Z$ that does not exhibit an effective synergy with the rest of the agents can be removed from the game and solved separately.

\begin{theorem}[Separability]\label{thm:separability}
Let $Z$ be any coalition such that $\opt(Z)+\opt(N\setminus Z) \leq \opt(N)$. Then, we can
define from the allocation scenario $\A$  two disjoint allocation scenarios restricted to agents $Z$ and $N\setminus Z$, respectively, that can be solved separately.
For each player $i\in N$, we can compute its Shapley value in the game associated with $\A$ by considering only the game associated with the restricted scenario where $i$ occurs.
\end{theorem}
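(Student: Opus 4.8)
My plan is to reduce the statement to the same bookkeeping used in the proof of Theorem~\ref{fact:modularity}, isolating the one genuinely new ingredient: a combinatorial lemma that turns the single inequality in the hypothesis into a decomposition of the worth function over \emph{every} coalition. Put $Z_1 = Z$ and $Z_2 = N\setminus Z$, let $\A_j$ be the allocation scenario obtained from $\A$ by keeping only the agents in $Z_j$ together with the goods that some such agent is interested in, and let $\G_j$ be the coalitional game on agent set $Z_j$ whose worth function is $\opt$ restricted to subsets of $Z_j$. (Since an optimal allocation for a coalition $C\subseteq Z_j$ never uses goods outside $\bigcup_{i\in Z_j}\Omega(i)$, this is exactly the game associated with $\A_j$.) Then, exactly as for Theorem~\ref{fact:modularity}, I would introduce the extensions $v_1',v_2'$ on $2^N$ given by $v_j'(C)=\opt(C\cap Z_j)$: once $\opt = v_1'+v_2'$ is established, additivity of the Shapley value gives $\sv_i(\G)=\sv_i(\tuple{N,v_1'})+\sv_i(\tuple{N,v_2'})$, and since every agent of $Z_2$ is a dummy in $\tuple{N,v_1'}$ while $i$ itself is a dummy in $\tuple{N,v_2'}$ whenever $i\in Z_1$, deleting dummies (which does not change the other players' Shapley values) yields $\sv_i(\G)=\sv_i(\G_1)$ for $i\in Z_1$, and symmetrically for $i\in Z_2$.

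So everything hinges on the \textbf{key lemma}: $\opt(C)=\opt(C\cap Z_1)+\opt(C\cap Z_2)$ for every $C\subseteq N$. I would first record two properties of the worth function $\opt$ viewed as a set function on $2^N$. (i) It is subadditive over disjoint coalitions: restricting an optimal allocation of $A\cup B$ to $A$ and to $B$ produces two allocations that use disjoint sets of goods and whose values sum to $\opt(A\cup B)$, so $\opt(A\cup B)\le\opt(A)+\opt(B)$ whenever $A\cap B=\emptyset$; in particular the hypothesis sharpens to the equality $\opt(Z_1)+\opt(Z_2)=\opt(N)$. (ii) It is submodular: $\opt(S)+\opt(T)\ge\opt(S\cup T)+\opt(S\cap T)$ for all $S,T\subseteq N$.

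Granting (i) and (ii), the decomposition spreads from the grand coalition in two strokes. Applying (ii) with $S=A\cup Z_2$ and $T=Z_1$ for an arbitrary $A\subseteq Z_1$ (so $S\cup T=N$ and $S\cap T=A$) gives $\opt(A\cup Z_2)\ge\opt(N)-\opt(Z_1)+\opt(A)=\opt(Z_2)+\opt(A)$, which together with (i) forces $\opt(A\cup Z_2)=\opt(A)+\opt(Z_2)$. Applying (ii) once more with $S=A\cup B$ and $T=Z_2$ for an arbitrary $B\subseteq Z_2$ (so $S\cup T=A\cup Z_2$ and $S\cap T=B$) and inserting the equality just obtained gives $\opt(A\cup B)\ge\opt(A)+\opt(B)$, again an equality by (i). Taking $A=C\cap Z_1$ and $B=C\cap Z_2$ then proves the key lemma, and hence the theorem.

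The main obstacle, I expect, is property (ii) --- the only step that genuinely uses the combinatorial structure of allocation scenarios. My plan there is to observe that $\opt(S)$ equals the maximum $\val$-weight of a set of goods that can be feasibly assigned to the agents of $S$ (a maximum-weight $b$-matching with agent capacities $k$), and that these assignable sets of goods are precisely the independent sets of a (generalized) transversal matroid $M(S)$, with $M(S)\subseteq M(S')$ whenever $S\subseteq S'$. Using the greedy/superlevel-set identity $\opt(S)=\int_0^\infty \mathrm{rank}_{M(S)}\!\big(\{\,g : \val(g)\ge t\,\}\big)\,dt$, it suffices to show that for each fixed set of goods $H$ the map $S\mapsto\mathrm{rank}_{M(S)}(H)$ is submodular; and the deficiency form of K\"onig's theorem writes it as $\mathrm{rank}_{M(S)}(H)=\min_{H'\subseteq H}\big(\,|H\setminus H'|+k\,|\{\,i\in S : \Omega(i)\cap H'\ne\emptyset\,\}|\,\big)$, a minimum of finitely many modular functions of $S$ up to additive constants, which is submodular; and submodularity is preserved by the integral over $t$. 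The points needing care are the capacity $k>1$ and the passage from the unweighted rank to the weighted objective. (An alternative to invoking (ii) is a direct augmenting-path argument showing that, once the top-level equality holds, optimal allocations of $C\cap Z_1$ and of $C\cap Z_2$ can always be merged without loss of value; but the route through submodularity looks shorter and cleaner.)
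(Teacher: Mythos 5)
Your overall strategy is correct, and it is genuinely different from the paper's. The paper fixes an optimal allocation $\pi$ for the grand coalition, shows that the set $S$ of goods it assigns to $Z$ satisfies $\val(S)=\opt(Z)$ (and symmetrically for $N\setminus Z$), and then invokes Theorem~4.4 of Greco and Scarcello (2014) to argue that every coalition $C$ can reach its optimum using only ``its side's'' goods; this literally produces the two disjoint restricted scenarios of the statement and reduces everything to Theorem~\ref{fact:modularity}. You instead prove the purely numerical key lemma $\opt(C)=\opt(C\cap Z)+\opt(C\cap(N\setminus Z))$ for all $C$, by combining disjoint subadditivity with submodularity of $\opt$ as a function of the agent set, and then reuse the additivity-plus-dummy bookkeeping of Theorem~\ref{fact:modularity}. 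Your two applications of submodularity (first with $S=A\cup (N\setminus Z)$, $T=Z$, then with $S=A\cup B$, $T=N\setminus Z$) are correct, and the Shapley-value conclusion follows. Two remarks: your construction does not actually yield \emph{disjoint} scenarios (your restricted scenarios may share goods), but this is immaterial for the Shapley-value claim, which is the substance of the theorem; and the submodularity you need is exactly the anti-monotonicity of marginal contributions that the paper itself takes as known at the beginning of Section~\ref{sec:bounds} and inside the proof of Theorem~\ref{thm:approssimazione}, so you could simply cite it rather than reprove it. What your route buys is a short, self-contained propagation argument that avoids reasoning about which concrete goods an optimal allocation assigns to $Z$ and avoids the external Theorem~4.4; what the paper's route buys is the explicit partition of the goods, hence the literal ``two disjoint allocation scenarios'' in the statement.

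The one genuine gap is in your proposed proof of property (ii). The final inference ``a minimum of finitely many modular functions of $S$ up to additive constants is submodular'' is false in general: on ground set $\{a,b\}$, the function $\min\bigl(2|S\cap\{a\}|,\,2|S\cap\{b\}|\bigr)$ takes the values $0,0,0,2$ on $\emptyset,\{a\},\{b\},\{a,b\}$ and violates submodularity. So the K\"onig/deficiency formula does not by itself deliver submodularity in $S$. The conclusion is nevertheless true, and the fix stays within your framework: for fixed $H$, $\mathrm{rank}_{M(S)}(H)$ is the maximum size of a matching between $H$ and $S$ with agent capacities $k$, i.e., the rank, in the transversal matroid whose ground set consists of $k$ clones of each agent (induced by the bipartite graph restricted to the goods $H$), of the union of the blocks of clones of the agents in $S$; matroid rank functions are submodular, and composing a submodular function with unions of disjoint blocks preserves submodularity, so $S\mapsto\mathrm{rank}_{M(S)}(H)$ is submodular and your threshold integral then gives submodularity of $\opt$. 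Alternatively, drop the matroid detour entirely and quote the known anti-monotonicity/submodularity of allocation games from Greco and Scarcello (2014), as the paper does.
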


\begin{proof}
Denote $N\setminus Z$ by $\bar Z$, and consider the allocation games $\G_1=\tuple{Z,v_1}$ and $\G_2=\tuple{\bar Z,v_2}$ restricted to agents in  $Z$ and $\bar Z$, respectively.

Preliminary observe that, for each pair of disjoint coalitions 
$C',C''\subseteq N$, $\opt(C')+\opt(C'') \geq \opt(C'\cup C'')$ holds. Indeed, given any optimal allocation for
 the agents in $C'\cup C''$, its restriction to $C'$ is a feasible allocation for $C'$, as well as 
 its restriction to $C''$ is a feasible allocation for $C''$.  
In particular, we have $\opt(Z)+\opt(\bar Z) \geq \opt(N)$ that, combined with the hypothesis about the considered coalition $Z$, entails that $\opt(Z)+\opt(\bar Z) = \opt(N)$.
This means that the values of the goods not used in any optimal allocation for $\bar Z$ is equal to the sum of the values of the best goods for the agents in $Z$.

We shall show that, for each optimal allocation $\pi$ for $N$, the set of goods $S\subseteq \Omega(Z)$ allocated by $\pi$ to $Z$ is such that $ \val(S) = \opt(Z) $ and the analogous property holds for $\bar Z$. Therefore, these agents get the best goods they can obtain. To prove this claim, consider the value $ v =\val(S) \leq \opt(Z) $ and the value
 $ \bar v \leq opt(\bar{Z}) $.
We know that $ \opt(Z) + \opt(N \setminus Z) = \opt(N) $ and, by the optimality of $\pi $, it holds $v+\bar v=\opt(N)$ too.

Consider now any coalition $C\subseteq N$, and let $C_a = C\cap Z$ and $C_b = C\cap \bar Z$.
  Let $\pi'$ be an optimal allocation for $C$.
We claim that there is an optimal allocation $\pi_a$ mapping goods from $S$ to $Z$ with $\val_{\pi_a}(C_a) = \val_{\pi'}(C_a)$, and an optimal allocation $\pi_b$ mapping goods not in $S$ to $\bar Z$ with $\val_{\pi_b}(C_b) = \val_{\pi'}(C_b)$.
Assume by contradiction that this is not the case. Then at least one of those allocations lead to values smaller than those in $\pi'$ (note that $\pi'$ cannot be worse, because the union of the two restricted allocations is a valid candidate mapping for $C$).
Assume $C_a$ gets a smaller total value (the other case is symmetrical), that is, $\val_{\pi_a}(C_a) < \val_{\pi'}(C_a)$.
Then, there exists some agent $i$ and a good $p\notin S$ so that $p\in \pi'(i)$.
By using Theorem~4.4 in~\cite{Greco2014i}, we can show that this would contradict the fact that $\val(S) = \opt(Z)$.
In fact, goods such as $p$ that are shared with agents outside $Z$ and that allows us to get a better value for the agents in $C_a\subseteq Z$, could be used to improve the choice of the available goods $S$ for the full set $Z$.

Now, given that it suffices to use only the goods in $S$ for $Z$ and the remaining goods for $\bar Z$, we can define an equivalent game in which the goods in $S$ are of interest to agents in $Z$ only and the remaining to agents in $\bar Z$ only.
In the new game, $Z$ and $\bar Z$ are in fact sets of agents with no shared connections and the theorem follows immediately from Theorem~\ref{fact:modularity}.
\end{proof}

A very frequent and important case in applications, which falls in the case considered by this latter property, occurs when $C$ is a singleton $\{i\}$, and it happens that the optimal allocation for this coalition is equal to the marginal contribution of $i$ to $N\setminus \{i\}$. By using the property described above, the set $i$ can be removed from the game and solved separately, so that we immediately get $\phi(i)= \opt(\{i\})$.

 The following property identifies some goods that are useless for some agent $i$ and thus can be safely removed from its set of relevant goods $\Omega(i)$. Note that this operation does not affect other agents possibly interested in such goods. 

\begin{fact}[Useless goods]\label{fact:useless}
 Let $i\in N$ be an agent, and let $g\in \Omega(i)$ be a good such that 
 $\val(g) + \max_{g' \in \Omega(i)\setminus \{g\}}\val(g') <   {\it marg}(\{i\},N) $.
Then, the modified allocation scenario where $g$ is removed from $\Omega(i)$ is equivalent to the original one, that is, the two scenarios have the same associated game.
\end{fact}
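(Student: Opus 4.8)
The plan is to prove the coalition-wise statement that $\opt_{\A'}(C)=\opt_\A(C)$ for every $C\subseteq N$, where $\A'$ is the scenario obtained from $\A$ by deleting $g$ from $\Omega(i)$; this immediately gives that $\A$ and $\A'$ induce the same game. One inequality is trivial, since every feasible allocation in $\A'$ is also feasible in $\A$: $\opt_{\A'}(C)\le\opt_\A(C)$. For the converse, the case $i\notin C$ is immediate, because deleting $g$ from $\Omega(i)$ does not change $\Omega(j)$ for any $j\neq i$. So fix $C$ with $i\in C$ and argue by contradiction: if $\opt_{\A'}(C)<\opt_\A(C)$, then every allocation that is optimal for $C$ in $\A$ must assign $g$ to $i$, since any optimal-in-$\A$ allocation avoiding $g$ would be a feasible allocation in $\A'$ of value $\opt_\A(C)>\opt_{\A'}(C)$, which is impossible.

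Fix one such optimal allocation $\pi$ for $C$, so $g\in\pi(i)$. The core of the argument is to sandwich the total value $\val(\pi(i))$ that agent $i$ receives. \emph{From above}: each agent is assigned at most $k$ goods, and with $k=2$ (as in the VQR game) the bundle $\pi(i)$ is either $\{g\}$ or $\{g,g'\}$ for a single $g'\in\Omega(i)\setminus\{g\}$, so
\[
  \val(\pi(i))\;\le\;\val(g)+\max_{g'\in\Omega(i)\setminus\{g\}}\val(g')\;<\;{\it marg}(\{i\},N),
\]
where the last inequality is exactly the hypothesis. \emph{From below}: restricting $\pi$ to $C\setminus\{i\}$ yields a feasible allocation of value $\opt_\A(C)-\val(\pi(i))$, so $\opt_\A(C\setminus\{i\})\ge\opt_\A(C)-\val(\pi(i))$, i.e.\ $\val(\pi(i))\ge\opt_\A(C)-\opt_\A(C\setminus\{i\})={\it marg}(\{i\},C)$. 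Combining the two bounds gives ${\it marg}(\{i\},C)<{\it marg}(\{i\},N)$.

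To close the argument I would invoke that in an allocation game the marginal contribution of a fixed agent is non-increasing as its coalition grows — equivalently, that the worth function $\opt$ is submodular — so that $C\subseteq N$ forces ${\it marg}(\{i\},C)\ge{\it marg}(\{i\},N)$, contradicting the inequality just derived; hence no $C\ni i$ can have $\opt_{\A'}(C)<\opt_\A(C)$, and the two scenarios coincide. I expect this last step to be where the real work lies: the needed monotonicity of marginals does \emph{not} follow from the elementary subadditivity $\opt(C')+\opt(C'')\ge\opt(C'\cup C'')$ exploited in the proof of Theorem~\ref{thm:separability}, but from an exchange / augmenting-walk argument on the underlying maximum-weight assignment problem, in the same spirit as Theorem~4.4 of~\cite{Greco2014i}: one overlays optimal allocations for two coalitions $S\subseteq T$, decomposes their symmetric difference into alternating walks and cycles that respect the unit capacity of each good, and re-routes the piece incident to the added agent to get $\opt(S)-\opt(S\setminus\{i\})\ge\opt(T)-\opt(T\setminus\{i\})$. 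A secondary caveat is the use of $k\le2$: the left-hand side of the hypothesis is exactly the largest value attainable by a subset of $\Omega(i)$ of size at most $2$ that contains $g$, and for $k\ge3$ the statement is false as written (agent $i$ may be forced onto several cheap goods, $g$ included), so the natural generalization replaces $\max_{g'}\val(g')$ by the sum of the $k-1$ largest values in $\Omega(i)\setminus\{g\}$.
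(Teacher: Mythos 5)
Since the paper states Fact~\ref{fact:useless} without any accompanying proof, there is no ``paper proof'' to compare against; judged on its own, your argument is correct for $k\le 2$ and is the natural one. The sandwich $\mathit{marg}(\{i\},C)\le \val(\pi(i)) < \val(g)+\max_{g'\in\Omega(i)\setminus\{g\}}\val(g') < \mathit{marg}(\{i\},N)$ for an optimal $\pi$ of a coalition $C\ni i$ with $g\in\pi(i)$ is exactly the right reduction, and the closing step needs less work than you anticipate: the anti-monotonicity of marginal contributions in allocation games (for $C\subseteq N$, $\mathit{marg}(\{i\},C)\ge \mathit{marg}(\{i\},N)$) is precisely the property the paper recalls ``for free'' at the start of Section~\ref{sec:bounds} and also leans on, via Theorem~4.4 of \cite{Greco2014i}, in the proof of Theorem~\ref{thm:separability}; so you can simply invoke it in the same way rather than reconstruct the alternating-walk exchange argument yourself. (You are right, though, that it does not follow from the easy subadditivity used in Theorem~\ref{thm:separability}, and note that the property you need is the submodular-type one stated in Section~\ref{sec:bounds}, not the ``supermodular'' wording that appears later in Section~\ref{subsec:approximation-fpras}.)

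Your caveat about $k$ is well founded and identifies a genuine imprecision in the statement as given: the hypothesis only caps bundles containing $g$ of size at most two, so for $k\ge 3$ the fact fails as written. For instance, with $k=3$ and an agent $i$ whose goods are exclusive to him with values $1,10,10$ (where $g$ is the value-$1$ good), the hypothesis holds ($1+10=11<\mathit{marg}(\{i\},N)=21$), yet deleting $g$ lowers $\opt(\{i\})$ from $21$ to $20$. So the fact should be read for $k\le 2$ (the VQR setting), or with the maximum replaced by the sum of the $k-1$ largest values in $\Omega(i)\setminus\{g\}$, exactly as you propose; with that adjustment your proof goes through verbatim for general $k$.
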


We conclude this section with a simple property that does not help to simplify the game, but allows us to avoid the computation of unnecessary optimal allocations, during the computation of marginal contributions.

\begin{fact}[Disconnected agent]\label{fact:disconnected}
 Let $i\in N$ be an agent and let $C\subseteq N$ be a component disconnected from $i$, that is, such that 
  $\Omega(i)\cap \Omega(j) =\emptyset$, for each $j\in C$.
Then, $\opt(\{i\}\cup C)= \opt(\{i\})+\opt(C)$ holds and the marginal contribution of $i$ to $C$ is $opt(\{i\})$.
\end{fact}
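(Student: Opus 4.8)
The plan is to establish the two inequalities $\opt(\{i\}\cup C)\le \opt(\{i\})+\opt(C)$ and $\opt(\{i\}\cup C)\ge \opt(\{i\})+\opt(C)$ separately, and then read off the marginal contribution. For the first inequality I would simply reuse the elementary fact already proved inside the proof of Theorem~\ref{thm:separability}: for any two disjoint coalitions $C',C''\subseteq N$ we have $\opt(C')+\opt(C'')\ge\opt(C'\cup C'')$, since the restriction of an optimal allocation for $C'\cup C''$ to each side is a feasible allocation for that side. Instantiating $C'=\{i\}$ and $C''=C$ gives $\opt(\{i\})+\opt(C)\ge\opt(\{i\}\cup C)$; note this direction does not even use the disconnectedness hypothesis.

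For the reverse inequality I would use disconnectedness. Let $\pi_i$ be an optimal allocation for $\{i\}$ and $\pi_C$ an optimal allocation for $C$. Every good in the image of $\pi_i$ lies in $\Omega(i)$, while every good in the image of $\pi_C$ lies in $\bigcup_{j\in C}\Omega(j)$; since $\Omega(i)\cap\Omega(j)=\emptyset$ for every $j\in C$, these two images are disjoint. Hence the mapping obtained by taking $\pi_i$ on $\{i\}$ and $\pi_C$ on $C$ is a feasible allocation for $\{i\}\cup C$, and its value is $\opt(\{i\})+\opt(C)$. Therefore $\opt(\{i\}\cup C)\ge\opt(\{i\})+\opt(C)$, and combining the two inequalities yields $\opt(\{i\}\cup C)=\opt(\{i\})+\opt(C)$.

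Finally, the marginal contribution of $i$ to $C$ is by definition $\opt(\{i\}\cup C)-\opt(C)$, which by the identity just established equals $\opt(\{i\})$, as claimed. The only point deserving a line of care — and the closest thing to an obstacle in an otherwise routine argument — is the observation that an optimal allocation never assigns to an agent a good outside that agent's interest set, so the disjointness of $\Omega(i)$ from each $\Omega(j)$, $j\in C$, genuinely forces the images of $\pi_i$ and $\pi_C$ to be disjoint and hence the combined mapping to be feasible.
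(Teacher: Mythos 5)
Your proof is correct, and it matches the intended reasoning: the paper states this fact without an explicit proof, treating it as immediate, and your two-inequality argument (subadditivity of $\opt$ for disjoint coalitions, already noted in the proof of Theorem~\ref{thm:separability}, plus the feasibility of gluing $\pi_i$ and $\pi_C$ since disconnectedness makes their images disjoint) is exactly the evident justification. Reading off the marginal contribution as $\opt(\{i\}\cup C)-\opt(C)=\opt(\{i\})$ completes it correctly.
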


\section{Lower and Upper Bounds for the Shapley Value} \label{sec:bounds}

In this section we describe the computation of a lower bound and an upper bound for the Shapley value
of any given allocation game $\G_\A=\tuple{N,v_\A}$.
The availability of such bounds can be helpful to provide a more accurate estimation of the approximation error in randomized  algorithms. Moreover, whenever the two bounds coincide for some agent, we clearly get the precise Shapley value for that agent. 
 We shall see that this often occurs in practice, in our case study.

Preliminarily observe that in allocation games we have for free a simple pair of bounds.
Indeed, recall that the anti-monotonicity property holds, so that, for each pair of coalitions $C_1 \subseteq C_2$,  $marg(\{i\},C_2) \leq marg(\{i\},C_1)$.
Then, for each player $i$ and for every coalition $C\subseteq N\setminus \{i\}$,  
 we have $marg(\{i\},N) \leq marg(\{i\},C) \leq marg(\{i\},\emptyset) = \opt(\{i\})$.
 It immediately follows that 
  $$marg(\{i\},N) \leq \sv_i \leq \opt(\{i\}).$$

 To obtain tighter bounds we observe that the neighbors of $i$ in a coalition $C$ are the agents having the higher influence on the marginal contribution  of $i$ to $C$.
 Indeed, they are precisely those agents interested in using the goods of $i$ when he/she does not belong to the coalition. We already observed that, in the extreme case that no neighbors are present,
  $i$ contributes with all her/his best goods.
  The idea is to consider the power-set of $\neigh(i)$ as the only relevant sets of agents.
     
     Let $P'$ be a set of neighbors of $i$,  and $C = N \setminus (\neigh(i) \cup \{i\})$
     For the computation of the lower bound in Algorithm~\ref{alg:bounds}, for such a profile $P'$
       we compute the marginal contribution of $i$ to  $C\cup P'$, but use this same value
    for the marginal contributions of $i$ to every coalition $C'\subseteq N$ such that 
    $C'\cap \neigh(i)= P'$, that is, for every coalition with the same configuration $P'$ of neighbors of $i$.
  Furthermore, we use a suitable factor $y$ to weigh this value in order to simulate that every such a 
  coalition $C'$ gets that same marginal contribution from $i$.

The case of the upper bound is obtained in the dual way, by using instead the most favorable case where 
 we use the marginal contribution of $i$ to $P'$ in place of the marginal contribution of $i$ to
 any coalition $C'\subseteq N$ with  $C'\cap \neigh(i)= P'$.

%

\begin{algorithm}
	\caption{Computing Bounds for the Shapley Value in Allocation Games}
	\par
	\textbf{Input:} An allocation game $\G_\A=\tuple{N,v_\A}$; \\
	\textbf{Output:} A pair of vectors $(LB,UB)$ encoding, respectively, a lower bound and an upper bound of the Shapley value of $\G_\A$;

	\begin{algorithmic}[1]
		\ForAll {$i \in N$}
			\State $P:=Powerset(\neigh(i))$;
			
			\State $C:=N \setminus (\neigh(i) \cup \{i\})$;
			\State $l= |C|$;
			
			\ForAll {$P' \in P$}
 	\State $Z:= \neigh(i)\setminus P'$;
		\State $y= \sum_{k=0}^{l} {\frac{(l - k + |P'|)! \cdot (|Z| + k)!}{|N|!} \cdot \binom{l}{k}}$;
				
				\State $LB_i\mathrel{+}=y\cdot (v_\A(C \cup P' \cup \{i\}) - v_\A(C\cup P')) $;
				\State $UB_i\mathrel{+}=y\cdot (v_\A(P' \cup \{i\}) - v_\A(P')) $;

			\EndFor		
		\EndFor
		
		\State \Return $(LB,UB)$;
		
	\end{algorithmic}
	\label{alg:bounds}
\end{algorithm}

\begin{theorem} \label {thm:approssimazione}
Let $(LB,UB)$ be the output of Algorithm~\ref{alg:bounds}.
For each agent $i\in N$, $LB_i \leq \sv(i) \leq UB_i$ holds, and the computation of such values can be done in time
 $O(2^{|\neigh(i)|}|N|^3)$.
 \end{theorem}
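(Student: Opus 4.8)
The plan is to expand $\sv_i$ according to how a coalition $C\subseteq N\setminus\{i\}$ meets the neighbourhood $\neigh(i)$, and then to replace each marginal contribution by the quantity the algorithm actually evaluates. Write $n=|N|$, let $C_0=N\setminus(\neigh(i)\cup\{i\})$ (the set called $C$ in Algorithm~\ref{alg:bounds}) and $l=|C_0|$, so that $N$ is the disjoint union $\{i\}\cup\neigh(i)\cup C_0$ and $n=1+|\neigh(i)|+l$. Every $C\subseteq N\setminus\{i\}$ is uniquely $C=P'\cup K$ with $P'=C\cap\neigh(i)$ and $K=C\cap C_0\subseteq C_0$, and $|C|=|P'|+|K|$ since $\neigh(i)\cap C_0=\emptyset$. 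Grouping the defining sum of the Shapley value by $P'$ gives
$$\sv_i=\sum_{P'\subseteq\neigh(i)}\ \sum_{K\subseteq C_0} w_{P',K}\cdot marg(\{i\},P'\cup K),\qquad w_{P',K}=\frac{(|P'|+|K|)!\,(n-|P'|-|K|-1)!}{n!}.$$

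The first technical step is to verify that, for a fixed $P'$, the inner coefficients sum to exactly the weight $y$ computed at line~7 of Algorithm~\ref{alg:bounds}. Collecting the $\binom{l}{k}$ sets $K$ of each size $k$ and using $n-|P'|-k-1=l-k+|Z|$, where $Z=\neigh(i)\setminus P'$ and $|Z|=|\neigh(i)|-|P'|$, one obtains $\sum_{K\subseteq C_0}w_{P',K}=\sum_{k=0}^{l}\binom{l}{k}\frac{(|P'|+k)!\,(l-k+|Z|)!}{n!}$. Reindexing $k\mapsto l-k$ and using $\binom{l}{l-k}=\binom{l}{k}$ rewrites this as $\sum_{k=0}^{l}\binom{l}{k}\frac{(l-k+|P'|)!\,(|Z|+k)!}{n!}$, which is precisely the value $y$ of line~7. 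I expect this reindexing to be the only mildly non-obvious point, so I would spell it out.

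With the weights matched, the two bounds follow from the anti-monotonicity of marginal contributions recalled just before the algorithm. For every $K\subseteq C_0$ we have $P'\subseteq P'\cup K\subseteq P'\cup C_0=C_0\cup P'$, hence $marg(\{i\},C_0\cup P')\le marg(\{i\},P'\cup K)\le marg(\{i\},P')$. Substituting the left inequality into the double sum above, term by term, and using $\sum_{K\subseteq C_0}w_{P',K}=y$, yields $\sum_{P'}y\cdot marg(\{i\},C_0\cup P')\le\sv_i$; substituting the right inequality yields $\sv_i\le\sum_{P'}y\cdot marg(\{i\},P')$. Since for $C=C_0$ we have $v_\A(C\cup P'\cup\{i\})-v_\A(C\cup P')=marg(\{i\},C_0\cup P')$ and $v_\A(P'\cup\{i\})-v_\A(P')=marg(\{i\},P')$, these two sums are exactly the final accumulated values $LB_i$ and $UB_i$ of lines~8--9, so $LB_i\le\sv_i\le UB_i$.

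For the running time, fix $i$: the loop over $P'\in Powerset(\neigh(i))$ executes $2^{|\neigh(i)|}$ times, and generating the power set costs $O(2^{|\neigh(i)|}|\neigh(i)|)$. In each iteration, computing $y$ takes $O(l)=O(n)$ arithmetic operations, and each of the four evaluations of $v_\A(\cdot)=\opt_\A(\cdot)$ is a (capacity-$k$) maximum-weight bipartite matching problem on a graph with $O(n)$ vertices, solvable in $O(n^3)$ time; hence each iteration costs $O(n^3)$ and the whole computation of $LB_i$ and $UB_i$ costs $O(2^{|\neigh(i)|}\,n^3)$, as claimed. The one caveat worth stating explicitly is that $\opt_\A$ is polynomial-time computable on allocation scenarios — an optimal allocation is an optimal solution of an assignment/matching problem — so there is no tension with the $\#$P-hardness of the Shapley value, which stems from the exponential number of coalitions rather than from any single $\opt_\A$ evaluation.
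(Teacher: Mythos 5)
Your proof is correct and follows essentially the same route as the paper's: group coalitions by their intersection $P'$ with $\neigh(i)$, bound each marginal contribution via anti-monotonicity by the contributions to $C_0\cup P'$ and to $P'$, and weight by the factor $y$. The only difference is that you explicitly verify the combinatorial identity $\sum_{K\subseteq C_0} w_{P',K}=y$ (via the reindexing $k\mapsto l-k$), which the paper dispatches as ``a simple combinatorial argument'' together with the remark $\binom{l}{k}=\binom{l}{l-k}$, so your write-up is, if anything, a more careful rendering of the same argument.
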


\begin{proof}
Let $i$ be an agent of the game.
The algorithm is based on the computation of any possible combination $P'$ of the neighbors of $i$.
Regarding the computation of the lower bound, for each such profile $P'$, the algorithm considers a coalition $ C \cup P'$  obtained by completing $P'$ with all the other agents in $N\setminus \{i\}$ that are not neighbors of $i$.

The algorithm uses the value of the marginal contribution of $i$ to such coalition, that is, the value $\delta = v_\A(C \cup P' \cup \{i\}) - v_\A(C\cup P') $, in place of the marginal contributions of $i$ to each coalition $ C' \subseteq N$ such that $ C' \cap\neigh(i) = P' $.
Now, because $C'\subseteq C$, by exploiting the anti-monotonicity property of the marginal contributions in allocation games, we get immediately $marg(\{i \}, C) \leq marg(\{i \}, C')$.
Then, the algorithm weighs in a suitable way $\delta$ so that this value is used in place 
of the right marginal contribution (not lower than $\delta$) of $i$ to each coalition $C'$ of the form described above. A simple combinatorial argument shows that this can be achieved by multiplying $\delta$ by the following factor

  \begin{equation}
y=\sum_{k = 0}^{l}\frac {(l - k + | P'|)! \cdot (| Z | + k) !}{| N |!} \cdot \binom{l}{k},
\end{equation}
where $l= | N \setminus (\neigh(i) \cup \{i\}) |$ and $Z= \neigh(i)\setminus P'$.

Regarding the computation of the upper bound of the Shapley value of $i$, 
we proceed in a similar way but using the marginal contribution of $i$ to the profile $P'$ containing only its neighbors, instead of the marginal contributions to the various coalitions $C' \subseteq N$
such that  $C' \cap\neigh(i) = P' $.
Indeed, in this case we have  $P' \subseteq C'$ and therefore $marg(\{i \}, C') \leq marg(\{i \}, P')$.
Again, we need to multiply such value by a factor which takes into account of all possible ways of completing $P'$ to any coalition $C'$ with the same profile of $i$'s neighbors.
It is easy to see that we can again use the factor $y$ described above, by exploiting the fact that 
$\binom{l}{k} = \binom{l}{l-k}$.

Concerning the computational complexity, just observe that, for each element $P'$ of the power set of $\neigh(i)$, we have to solve a constant number of optimal allocation problems. 
Each of these problems requires the computation of an optimal weighted matching, which can be solved in time $O(|N|^3)$.
\end{proof}

\section{Approximating the Shapley Value} \label{sec:approximation}

\subsection{FPRAS for Supermodular and Monotone Coalitional Games} \label{subsec:approximation-fpras}

In order to approximate the Shapley value, one possibility is to use the Fully Polynomial-time Randomized Approximation Scheme (FPRAS) proposed in~\cite{Liben-Nowell2012}: for any $\epsilon > 0$ and $\delta > 0$, it is possible to compute in polynomial-time an $\epsilon-$approximation of the Shapley value with probability of failure at most $\delta$.
The technique works for supermodular and monotone coalitional games,
 and it can be shown that our allocation games indeed meet these properties~\cite{Greco2014i}.

The method is based on generating a certain number of permutations (of all agents) and computing the marginal contribution of each agent to the coalition of agents occurring before her (him) in the considered permutation. Then the Shapley value of each player is computed as the average of all such marginal contributions. The above procedure is repeated $O(\log(1/\delta))$ times, in indepedent runs, with the result for each agent consisting of the median of all computed values for her (him). Finally, the obtained values are scaled (i.e., they are all multiplied by a common numerical factor) to ensure that the budget-balance property is not violated.

Clearly enough, the more permutations are considered, the closer to the Shapley value the result will be.
We next report a slightly modified version of the basic procedure of this algorithm, where
 we avoid the computation of some marginal contributions, if we can obtain the result
 by using Fact~\ref{fact:disconnected}.

\begin{algorithm}
\caption{Shapley value approximation in allocation games}
\label{alg:rand_opt_sv}
\par
 \textbf{Input:} An allocation game $\G_\A=\tuple{N,v_\A}$;
  \\
 \textbf{Parameters:} Real numbers $0<\epsilon<1$ and $0<\delta<1$;\\ 
 \textbf{Output:} A vector $\tmymath{\sv}$ that is an $\epsilon$-approximation of the Shapley value of $\G_\A$, with probability $1 - \delta$;

\begin{algorithmic}[1]
\State $m = \frac{|N|\cdot(|N|-1)}{\delta \cdot \epsilon^2}$;
\State $i=0$;
\While{$i < m$} \label{alg1:while}
	\State ${\it shuffle}(N)$;
	\State $C:=\{\emptyset\}$;
	\ForAll {$j \in N$}
		\If{$\neigh(j) \cap C \neq \emptyset$}
		\State  $\tmymath{\sv}_j \mathrel{+}= v_\A(C \cup \{j\}) - v_\A(C)$;
		\Else
			\State  $\tmymath{\sv}_j \mathrel{+}= v_\A(\{j\})$;
		\EndIf
		\State $C:= C \cup \{j\}$;
		\State $i = i + 1$;
	\EndFor
\EndWhile
\ForAll {$j \in N$}
	\State  $\tmymath{\sv}_j= \frac{\tmymath{\sv}_j}{m}$;
	\EndFor
\State \Return $\tmymath{\sv}$;

\end{algorithmic}
\end{algorithm}

As a preliminary step, we compute the required number of permutations $m$ to meet the required error guarantee.
In each of the $m$ iterations, the algorithm generates a random permutation from the set of agents $N$.  We then iterate through this permutation and compute the marginal contribution of each agent $j$ to the set of agents $C$ occurring before $j$ in the permutation at hand.
If some neighbor of $j$ (in the agents graph) occurs in $C$, the algorithm proceeds as usual by 
computing the value of an optimal allocation for $C \cup \{j\}$ in order to obtain the value $v_\A(C \cup \{j\})$.
  Note indeed that this one computation is sufficient to get such a marginal contribution,
   because the value $\opt(C)$ for the coalition $C$ including the preceding agents (for the permutation at hand) is known from the previous step.
Moreover, by Fact~\ref{fact:disconnected}, we know that for those permutations in which all the players in $\neigh(j)$ follow $j$, the marginal contribution of $j$ is just $\opt(\{j\})$ (see step 10).
Finally, at steps 16--18 for each agent the algorithm divides the sum of her contributions by the number of performed iterations $m$. The correctness of the whole algorithm follows from Theorem~4 in~\cite{Liben-Nowell2012}.

\noindent \emph{Computation Time Analysis.} Let $n = |N|$ be the number of agents, and let $m$  be the required number of iterations. The cost of the algorithm is $O(m \times n \times margBlock)$, where $margBlock$ denotes the cost of computing each marginal contribution (steps 7--11).
 This requires the computation of an optimal weighted matching in a bipartite graph,
 which is feasible in $O(n^3)$, via the classical Hungarian algorithm.
 However, if the current agent is disconnected from the rest of the coalition,
 the cost is  given by a simple lookup in the cache where the best allocation for each single agent is stored.

\subsection{Sampling Algorithm When the Range of Marginal Contributions Is Known} \label{subsec:approximation-varsamples}

Maleki et al.~\cite{Maleki2014} propose a bound on the number of samples (over the population of marginal contributions) required to estimate an agent's Shapley value, when the range of his/her contributions is known. Their bound is based on Hoeffding's inequality~\cite{Hoeffding1963}, and it states that, in order to approximate the Shapley value of agent $i$ within an absolute value $\epsilon$, with failure probability at most $\delta_{i}$, that is, in order to get

\begin{equation}
\label{eq:maleki-prob-bound}
Prob\{|\tmymath{\sv}_{i}-\sv_{i}|\ge\epsilon\}\le\delta_{i}
\end{equation}

\noindent
at least $m_{i}$ samples are required, where:

\begin{equation}
\label{eq:maleki-sample-bound}
m_{i} = \left \lceil{ \frac{\ln{(\frac{2}{\delta_{i}}) \cdot r_{i}^{2}}}{2 \cdot \epsilon^{2}} }\right \rceil
\end{equation}

\noindent
In the above expression, $r_{i}$ denotes the range of $i$'s marginal contributions (i.e., $r_{i} = \opt(\{i\}) - marg(\{i\},N)$), where $N$ is the set of all agents that partecipate in the allocation game).
 This bound allows us to determine the number of required random samples for each agent $i$, once $\epsilon$ and $\delta_{i}$ are fixed. Assuming we want an overall failure probability $\delta$, each agent $i \in N$ could be assigned a failure probability $\delta_{i}=\delta/|N|$. In principle a higher failure probability 
$\delta_{i}$ could be tolerated for agents with larger ranges, at the expense of lower failure probability for agents with smaller ranges. However, our experimental tests performed with this variant, exhibited just a few marginal gains. 

Once the number of required samples for each agent is determined, the approximate Shapley value, with the desired guarantees on the absolute error, can easily be computed by a randomized algorithm 
evaluating the required samples of coalitions for each player (see Section~\ref{sec:parallel-implementations} for a brief description of our parallel implementation).

In order to consider the classical percentage expression for the approximation error,
 we should replace  $\epsilon$ by $\epsilon \cdot \sv_{i}$ in (\ref{eq:maleki-prob-bound}).
 First observe that $\sv_{i}\neq 0$ for all agents $i$ that are considered by the algorithms, because our simplification techniques preliminarily identify and remove from the game those agents having a null Shapley value (these agents must be interested only in goods with a null value).
In fact, the value of $\sv_{i}$ that would appear in (\ref{eq:maleki-sample-bound}) may be replaced by any known (non-null) lower bound $\ell_i \leq \sv_{i}$, at the expense of taking more samples than those strictly necessary. 
On our largest test instance (namely, the researchers of Sapienza University of Rome who participated in the research assessment exercise VQR2011-2014), the technique described in Section~\ref{sec:bounds} yields lower bounds that are greater that $0$ for all agents. It turns out that, in a matter of hours,
we are able to get approximate Shapley values within $5\%$ of the correct values.

It should be noted that the bound presented by Maleki et al., due to the exponential relation it establishes between $m_{i}$ and $\delta_{i}$,  allows us to compute efficiently good approximate Shapley values, at least on our test instances where the range of the marginal contributions is fairly limited. For a comparison, the FPRAS approach described in Section~\ref{subsec:approximation-fpras} would have taken a few years (instead of the few hours required by the approach presented here) to process our largest input instance with the same error guarantee (see Section~\ref{sec:experiments} for details on our experiments).

\section{Implementation Details and Experimental Evaluation}
\label{sec:experiments}

\subsection{Parallel Implementation of Shapley Value Algorithms}
\label{sec:parallel-implementations}

All the algorithms considered in this paper are amenable to parallel implementation. We engineered our parallel implementations as follows.

\vspace{0.2 cm}
\noindent \emph{FPRAS algorithm}~\cite{Liben-Nowell2012}.  Besides the input allocation game, and the two parameters $\delta$ and $\epsilon$, we added a third parameter, the \emph{thread pool size}. During the execution of the algorithm, each thread (there are as many threads as the thread pool size dictates) is responsible for generating a certain number of permutations according to the requested approximation factor and, for each permutation, it computes the marginal contributions of all authors to that permutation, and saves them to a local cache. Whenever a thread has generated its assigned number of permutations, it delivers its local cache of computed scores to a synchronized output acceptor (which increments the overall score of each author accordingly), and then shuts itself down as its work is completed. When all threads have shut down, each entry of the acceptor's output vector is averaged over the total number of permutations, yielding the final approximate Shapley vector for that run. The above procedure is repeated for each independent run. When all runs are done, the component-wise median of all final approximate Shapley vectors is computed, and the resulting vector is scaled (i.e., all entries are multiplied by a number such that the budget-balance property is enforced), yielding the desired approximation with the desired probability.

\vspace{0.2 cm}
\noindent \emph{Algorithm based on the ranges of samples}~\cite{Maleki2014}. As a preliminary step, the number of required samples for each author $i$ is determined by a sequential routine (as this computation is very fast), based on the approximation parameters $\delta$ and $\epsilon$, and on precomputed values for $\opt(\{i\})$, $marg(\{i\},N)$, where $N$ is the set of all authors, and $LB_{i}$. The algorithm also receives two extra parameters, $threadPoolSize$ and $batchSize$. Subsequently, each thread (the total number of threads is determined by $threadPoolSize$) asks a synchronized producer for a job (i.e., a pair $(i, numSamples)$). The synchronized producer either provides a job for the requesting thread, or it returns \emph{null}, if enough jobs have already been distributed to satisfy the approximation requirements. Upon receiving a job, a thread produces $numSamples$ uniformely distributed random subsets of $N \setminus \{i\}$, and for each such subset $S$, computes the marginal contribution of $i$ to $S$. The sum of these contributions is delivered to a synchronized output acceptor, which stores, for each author, the sum of all marginal contributions computed so far by the various threads. Notice that the job provider will always distribute pairs for which $numSamples \leq batchSize$. This is done to ensure, with proper tuning of parameter $batchSize$, load balancing between the threads. Finally, when a thread receives \emph{null} from the synchronized job provider, it simply shuts itself down, as there is no more work to do. When all threads have shut down, the output acceptor will average the sum of all marginal contributions of each author over the number of required samples for that author, yielding the approximate Shapley value.

\vspace{0.2 cm}
\noindent \emph{Exact algorithm.} In our exact algorithm implementation, each thread (the total number of threads is specified by an input parameter) asks a synchronized producer for a subset of authors to work with. The synchronized subset producer either provides an $n$-bit integer number (where $n$ is the number of authors) for the requesting thread, or it returns \emph{null} if all $2^n$ subsets have already been delivered for elaboration. Upon receiving an $n$-bit integer from the subset provider, a thread turns it into a subset of authors (if a bit is set to 1, then the corresponding author is included in the subset), and computes partial scores for all authors in the subset, storing the values obtained in a local cache. When a thread receives \emph{null} from the subset provider, it delivers its local cache of computed scores to a synchronized output acceptor (which increments the overall score of each author accordingly), and then shuts itself down, as it has no more work to do. When all threads have shut down, the output vector will contain the exact Shapley values for all authors.

\subsection{Experimental Results}
\label{subsec:exp_res}

\vspace{0.2 cm}
\noindent \emph{Hardware and software configuration.}
 Experiments have been performed on two dedicated machines. In particular, sequential implementations were run on a machine with an Intel Core i7-3770k 3.5 GHz processor, 12 GB (DDR3 1600 MHz) of RAM, and operating system Linux Debian Jessie. We tested the parallel implementations on a machine equipped with two Intel Xeon E5-4610 v2 @ 2.30GHz with 8 cores and 16 logical processors each, for a total of 32 logical processors, 128 GB of RAM, and operating system Linux Debian Wheezy.
Algorithms were implemented in Java, and the code was executed on the JDK 1.8.0 05-b13, for the Intel Core i7 machine, and on the OpenJDK Runtime Environment (IcedTea 2.6.7) (7u111-2.6.7-1~deb7u1), for the Intel Xeon machine.

\vspace{0.2 cm}
\noindent \emph{Dataset description.}
 We applied the algorithms to the computation of a fair division of the scores for the researchers of Sapienza University of Rome who participated in the research assessment exercise VQR2011-2014. Sapienza contributors to the exercise were 3562 and almost all of them were required to submit 2 publications for review. We computed the scores of each publication by applying, when available, the bibliographic assessment tables provided by ANVUR. 

\vspace{0.2 cm}
\noindent \emph{Preprocessing.} The analysis was carried out by preliminarily simplifying the input using the properties discussed in Section~\ref{sec:properties}, as explained next.

Starting with a setting with 3562 researchers and 5909 publications, first we removed each researcher having no publications for review. After this step a total of 370 authors were removed. Then, by exploiting the simplification described in Fact~\ref{fact:no-shared-null-goods}, we removed 2323 publications.
By using Theorem~\ref{thm:separability}, the graph was subsequently filtered removing each author whose marginal contribution to the grand coalition coincides with the optimal allocation restricted to the author himself. After this step 2427 researchers out of 3562 were removed.
Then we divided the resulting agents graph into connected components obtaining a total number of 156 connected components and we discovered only two components consisting of more than 10 agents. The sizes of these components are 691 and 15.
Eventually, the components were further simplified by using Fact~\ref{fact:useless}.
After the whole preprocessing phase, we obtained a total of 159 connected components with the largest one having 685 nodes. The size of the second largest component is just 15 while all the others remain very small (less than 10 nodes).
In the rest of the section, we shall illustrate results of experimental activity conducted over the various methods. To this end, we fixed the value $ \delta = 0.01$. This value was chosen heuristically, based on a series of tests conducted on various CUN Areas of Sapienza, where CUN Areas are (large) scientific disciplines such as Math and Computer Science (Area 01) or Physics (Area 02).

\vspace{0.2 cm}
\noindent \emph{Tests with components of variable size.} As already pointed out, after the preprocessing step we obtained very small connected components (less than 10 nodes) except for the largest two (685 and 15 nodes, respectively). For all components with less than 10 nodes, the exact algorithm, of which we used a sequential implementation for these tests,  performs very well (a few milliseconds), therefore we omit the analysis here.
In order to test all the other algorithms, besides the two largest components, we randomly extracted samples of (distinct) nodes out of the original graph, to produce different subgraphs with size $n \in \{23, 26, 30, 40\}$.

\begin{figure}[!t]
\centering
\includegraphics[width=.85\linewidth]{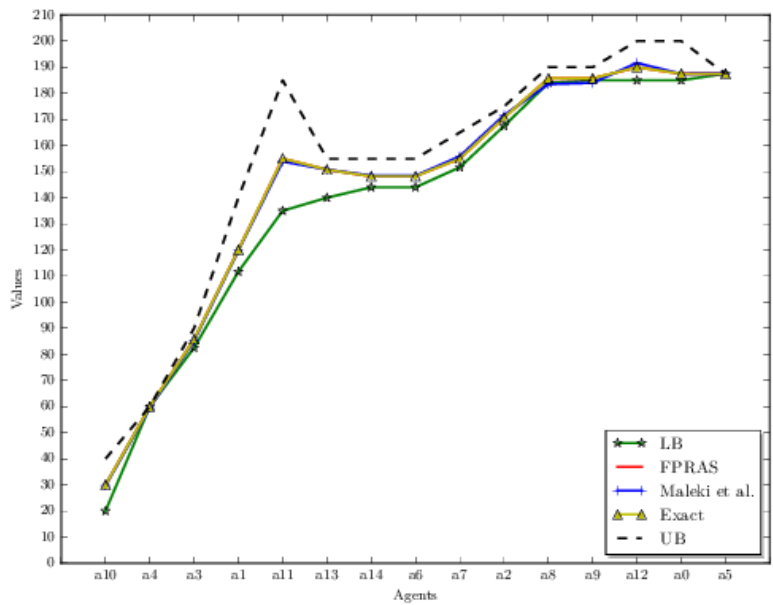}
\caption{Methods comparison ($n=15$).}\label{fig:methodsComparison:roma15sv}
\end{figure}

\begin{figure}[!t]
\centering
\includegraphics[width=.85\linewidth]{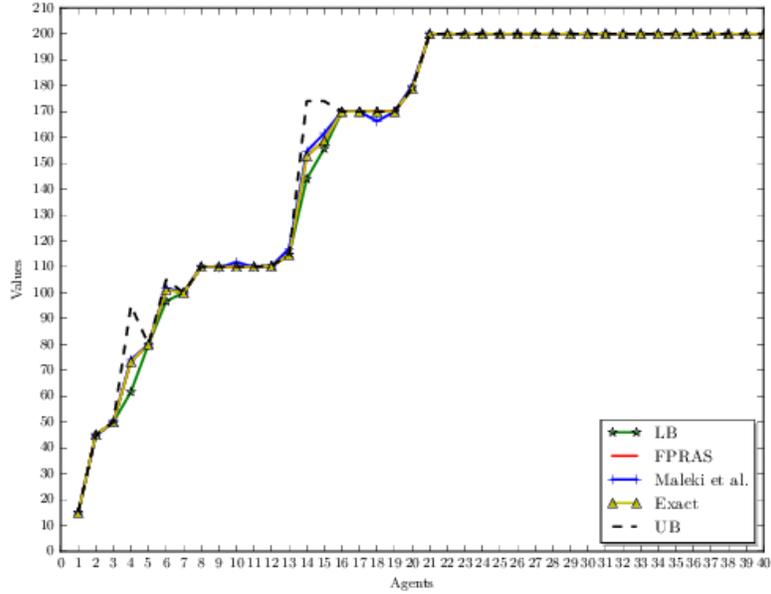}
\caption{Methods comparison ($n=40$).}\label{fig:methodsComparison:roma40sv}
\end{figure}

For the considered cases, 
we do not find significant differences among the values obtained by using the two approximation algorithms and the exact ones   (see, e.g., figures~\ref{fig:methodsComparison:roma15sv} and~\ref{fig:methodsComparison:roma40sv}, in which the approximation algorithms were required to produce results within 5\% of the exact value\footnote{In these two figures
the values obtained by FPRAS are not visible because they coincide with the exact values.}). Notably, with the exception of a small number of cases, our bounds (especially the lower bounds) are always very close to the exact value. In particular, for $n=26$
 we were able to immediately get the Shapley value for all agents, since upper and lower bounds coincide for all of them. 


We also evaluated how many computations of optimal allocations were avoided in the FPRAS of Liben-Nowell et al., by
 exploiting Fact~\ref{fact:disconnected}  (and hence executing in the latter case Step~10 rather than Step~8 in Algorithm~\ref{alg:rand_opt_sv}).  By fixing the approximation error at $\epsilon=0.3$, for each $n \in \{15, 23, 26, 30, 40\}$ we get the following savings:  $9.65 \cdot 10^5$ out of $3.5 \cdot10^6$ (i.e., 28\%), $2.34\cdot 10^6$ out of $1.29\cdot 10^7$ (18\%), $5.36\cdot 10^6$ out of $1.87\cdot 10^7$ (29\%), $8.78\cdot 10^6$ out of $2.9\cdot 10^7$ (30\%), and $1.46\cdot 10^7$ out of $6.93\cdot 10^7$ (21\%), respectively.

As already pointed out, the FPRAS method performed much better than its theoretical guarantee on the maximum approximation error. We report the real maximum and average approximation errors (denoted by X and Y, respectively) of our implementation w.r.t. the exact algorithm for each $n \in \{15, 23, 26\}$, with $\epsilon=0.3$. For $n=15$, we get X = 0.01 and Y = $3\cdot 10^{-3}$, for $n=23$ we get X = $1.5\cdot 10^{-3}$  and Y = $1.7\cdot 10^{-4}$, and for $n=26$ we get X = $1.06\cdot 10^{-4}$ and Y =  $1.59\cdot 10^{-5}$. In all cases, the maximum approximation error was about 1\% (or less) and therefore considerably below the theoretical guarantee (30\%). The algorithm based on the bound of Maleki et al. also performs better than its theoretical guarantee, though not by as wide a margin as the FPRAS method (it is, however, much faster, as we will see in the next paragraph). In this case, for $n=15$ we get X = 0.093 and Y = 0.046, for $n=23$ we get X = 0.098 and Y = 0.011, and for $n=26$ we get X = 0.097 and Y = 0.019. In all cases, the maximum approximation error was below 10\%, and therefore quite smaller than the required threshold.

\vspace{0.2 cm}
\noindent \emph{Running Times.} Figures~\ref{fig:wholeComputationTimes},~\ref{fig:romaParallelTimes} and~\ref{fig:romaMalekiTimes} report the computation times of the various algorithms.
In particular, Figure~\ref{fig:wholeComputationTimes} focuses on the sequential implementations of the brute-force algorithm for computing the exact values, and of the algorithms for computing the upper and lower bounds.
For the experiments, we computed separately the two bounds in order to point out that the computation of the lower bound requires in general more time, because it considers allocation over larger coalitions than those considered for the computation of the upper bound.
Moreover, as discussed in Section~\ref{sec:bounds}, the running times for computing the bounds heavily depend on the cardinality of the agents' neighborhoods. This explains why the running times for the case $n=50$ are smaller than those for the case $n=40$.

\begin{figure}[!t]
\centering
\includegraphics[width=.85\linewidth]{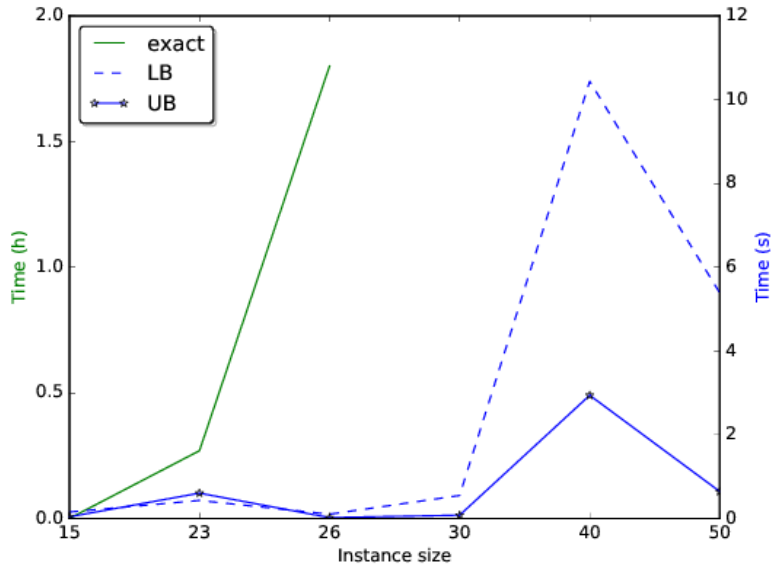}
\caption{Sequential implementations: running times for the computation of the exact value by using the brute-force algorithm (green), and of the upper and lower bounds (blue) vs instance size.}\label{fig:wholeComputationTimes}
\end{figure}

\begin{figure}[!t]
\centering
\includegraphics[width=.85\linewidth]{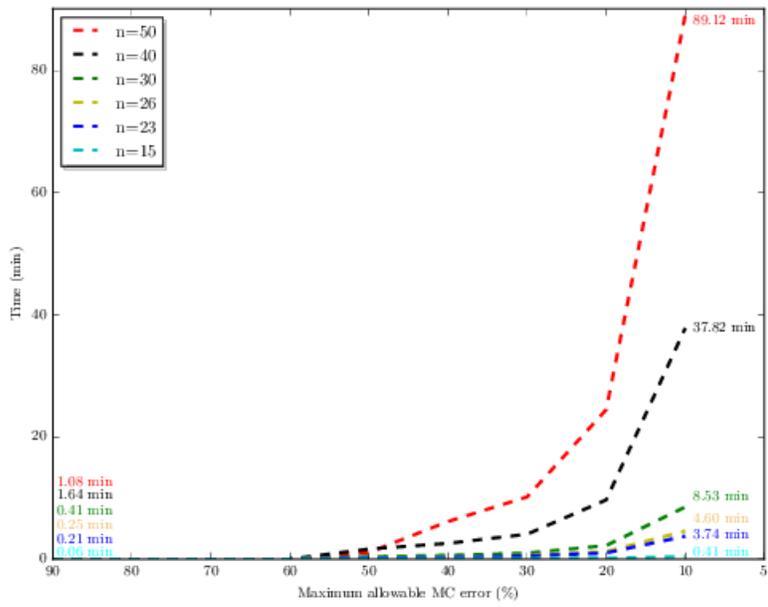}
\caption{Parallel implementation of FPRAS method: running times vs $\epsilon$.}\label{fig:romaParallelTimes}
\end{figure}

\begin{figure}[!t]
\centering
\includegraphics[width=.85\linewidth]{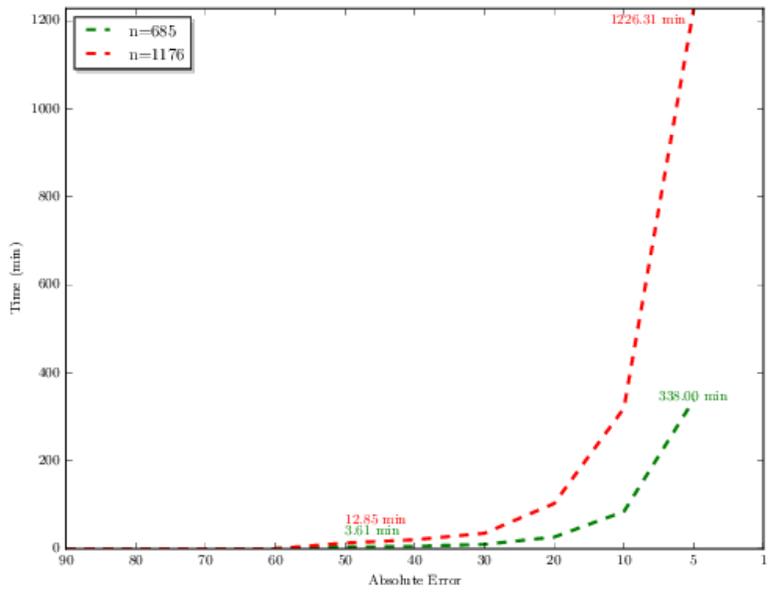}
\caption{Parallel implementation of Maleki-based algorithm: running times vs $\epsilon_{abs}$.}\label{fig:romaMalekiTimes}
\end{figure}

Figure~\ref{fig:romaParallelTimes} shows the running time of the parallel implementation of the FPRAS method, using 24 threads, for different values of $\epsilon$. In particular, we performed five trials over the different (sub)games described above, and report averaged measures. We can see that for games of reasonable size we can achieve a high theoretical approximation error guarantee. For instance, for the largest considered game ($n=50$) we were able to compute the approximate Shapley value with $\epsilon=0.1$ in less than 90 minutes.
There is a big gap between the performances of the FPRAS method, when using the extreme values we considered for the allowed approximation error. However, as  already pointed out, even when we used a poor theoretical guarantee on the approximation error, we still obtained a quite reasonable accuracy.

In spite of its excellent accuracy, and its high efficiency when compared to the exact algorithm, we estimated that our parallel implementation of the FPRAS method would take, with $\epsilon=0.05$ and 24 threads, roughly $3.33$ years to fully analyze the largest component of our Sapienza test case, comprising $685$ authors. By contrast, the parallel implementation of the algorithm based on the bound proposed by Maleki et al., with the same settings, takes only 11.75 hours. The bound on the number of samples proposed by Maleki et al. requires the knowledge of the range of the marginal contributions,
which was computed in less than 3 minutes. Moreover, in order to guarantee that the results are within a certain percentage of the correct values, the lower bounds for the Shapley value are also required. For the biggest component of our test instance, we computed the lower bounds for the 681 authors with neighborhood size up to 19; for the few remaining authors with more neighbors (just 4 authors), we used as lower bound the marginal contribution to the grand coalition.
 Multithreaded computation of the lower bounds took approximately 160 hours.

It should be noted that the bound by Maleki et al. could be applied directly to the largest CC in the unsimplified Sapienza VQR graph. This CC comprises 1176 authors. In this case, straightforward application of the bound for all authors requires, on our server, with 24 threads and an absolute error $\epsilon_{abs}=5$, roughly 20.5 hours. If we set $\epsilon_{abs}=1$, the computation time increases to approximately 31 days. Figure~\ref{fig:romaMalekiTimes} shows the running times of the parallel implementation of Maleki-based algorithm on the two largest CCs in our test instances, with varying values for $\epsilon_{abs}$.

\section{Conclusions and Future Work}\label{sec:conclusion}

In this paper, we have identified useful properties that allow us to decompose large instances 
of allocation problems  into smaller and simpler ones, in order to be able to compute the Shapley value.
 The proposed techniques greatly improve the applicability to real-world problems of the 
  approximation algorithms described in the literature.
 Furthermore, we  described an algorithm for the computation of an upper bound and a lower bound for the Shapley value. 
  These bounds provide a more accurate estimate of approximation errors, and
  (often, in our case study) yield the exact Shapley value for those agents where
  upper and lower bounds coincide.

We have engineered parallel implementations of the considered algorithms, and we have 
tested them on a real-world problem, namely, the 2011-2014 Italian research assessment program (known as VQR), modeled as an allocation game. With the proposed tools, we have been able to compute, either exactly, or within a fairly good approximation (5\% of the correct value with 99\% probability) the Shapley value for all agents in our largest test instance, namely, Sapienza University of Rome, comprising 3562 researchers and 5909 research products.

As future work,
 we would like to extend the structure-based technique described in~\cite{GLS15}
 to the more general class of games where more than one good can be allocated to each agent (as it is the case in VQR allocations). This way, we could compute efficiently the exact Shapley value for large games, provided that the treewidth of the agents graph is small. 
 With this respect, we note that this is not the case for the large Sapienza VQR instance, because after the simplification performed with the tools described in the paper
  we are left with a large component whose estimated treewidth is 64. This is too much for using
  structure-based decomposition techniques. However, for the sake of completeness, we note that all other components have a low treewidth.  For instance, the component with  50 agents used in our tests  has treewidth 5.

Finally, we would like to obtain tighter lower and upper bounds, possibly with a computational effort that can be tuned to meet given time constraints.




\end{document}